

\documentclass[journal,10pt,twocolumn]{IEEEtran}

\usepackage{amssymb, amsmath, amsthm}
\usepackage{amsfonts, bbm}
\usepackage{graphicx}
\usepackage{epstopdf}
\usepackage{times}
\usepackage{float} 
\usepackage{array}
\usepackage{arydshln}
\usepackage{bibentry}
\usepackage{booktabs}

\usepackage{cite}
\usepackage{subfigure}
\usepackage{enumitem}
\usepackage{color}
\usepackage{setspace}
\usepackage{bigstrut}
\usepackage{multirow}
\usepackage{algorithm}
\usepackage{algorithmic}
\usepackage{subfig}
\usepackage{mathtools}

\newcommand*{\Scale}[2][4]{\scalebox{#1}{$#2$}}%
\newtheorem{Corollary}{Corollary}

\newtheorem{theorem}{Theorem}
\newcommand{\figref}[1]{Fig.~\ref{#1}}
\linespread{1.0}

\raggedbottom

\begin{document}

\title{On the Performance of Renewable Energy-Powered UAV-Assisted Wireless Communications}

\author{Silvia Sekander, Hina Tabassum,  and Ekram Hossain\thanks{S. Sekander and E. Hossain are with the Department of Electrical and Computer Engineering, University of Manitoba, Canada (Emails: sekandes@myumanitoba.ca, Ekram.Hossain@umanitoba.ca). H. Tabassum is with the Lassonde School of Engineering,  York University, Canada (Email: Hina.Tabassum@lassonde.yorku.ca). This work was supported by the Natural Sciences and Engineering Research Council of Canada (NSERC).}
}

\maketitle
\begin{abstract}
	We develop novel statistical models of the  harvested energy from renewable energy sources (such as solar and wind energy)  considering  harvest-store-consume (HSC)  architecture.  We consider three renewable energy harvesting scenarios, i.e. (i)~harvesting from the solar power, (ii)~harvesting from the wind power, and (iii)~hybrid solar and wind power. In this context, we first derive the closed-form expressions for the probability density function (PDF) and cumulative density function (CDF) of the harvested power from the solar and wind energy sources. Based on the derived expressions, we calculate the probability of energy outage at UAVs and signal-to-noise ratio (SNR) outage at ground cellular users. The energy outage occurs when the UAV is unable to support the flight consumption and transmission consumption from its battery power and the harvested power. Due to the intricate distribution of the hybrid solar and wind power, we derive novel closed-form expressions for the  moment generating function (MGF) of the harvested  solar power and wind power. Then, we apply Gil-Pelaez inversion  to evaluate the energy outage at the UAV and signal-to-noise-ratio (SNR) outage at the ground users. We formulate the SNR outage minimization problem and obtain closed-form solutions for the transmit power and flight time of the UAV.  In addition, we derive novel closed-form expressions for the moments of the solar power and wind power and demonstrate their applications in computing novel performance metrics considering  the stochastic nature of the amount of harvested energy as well as energy arrival time. These performance metrics include  the probability of charging the UAV battery within the flight time, average UAV battery charging time, probability of energy outage at UAVs,  and the probability of eventual energy outage (i.e. the probability of energy outage in a finite duration of time) at UAVs. Numerical results validate the analytical expressions and reveal interesting insights related to the optimal flight time and transmit power of the UAV as a function of the harvested energy. 

\end{abstract}

\begin{IEEEkeywords}
UAV-assisted wireless communications, energy harvesting, energy outage, SNR outage, battery charging time, eventual energy outage probability
\end{IEEEkeywords}

\section{Introduction}
The unmanned aerial vehicles (UAVs) will play an important role in fulfilling the communication requirements of the next generation wireless networks~\cite{DSC}. The primary benefits of UAV networks include the operation in dangerous and disastrous environments, on-demand relocation, improved coverage due to higher line-of-sight (LOS) connections with the ground cellular users, and an extra-degree of freedom due to three-dimensional (3D) movements~\cite{WCU}. Despite various potential benefits, the energy consumption at UAVs is a primary bottleneck due to their mechanical and communication power requirements. Energy consumption at UAVs can potentially limit the endurance time and communication performance of UAVs. To enhance the sustainability and endurance time of UAVs,  energy harvesting from renewable energy sources (e.g. solar, wind, electromagnetic radiations) is a low-cost alternative where a device can harvest  from free energy sources. However, renewable energy sources are intermittent and uncertain and thus the amount of energy harvested is  random~\cite{24}. For deployment and operation of the renewable energy-powered UAV-assisted wireless communications systems, it will be useful to mathematically model the dynamics of energy harvested from a variety of renewable energy sources and analyze the communication performance as a function of the parameters such as time of the day, wind speed, etc.

\subsection{Background Work}
Several studies in the existing literature have dealt with the energy efficiency of UAV based wireless communications. The authors in \cite{OTT} investigated the power minimization problem in a UAV network such that the cell boundaries and locations of UAVs can be optimized iteratively. Given a certain cell boundary, the locations of the UAVs were derived using a facility location framework. In \cite{MIT}, the authors discussed an energy-efficient UAV deployment method to collect information from internet-of-things (IoT) devices in uplink. Using the $K$-means clustering approach, the ground devices are first clustered and then served by the UAV.  The authors in \cite{3DU} proposed an optimal placement algorithm for UAV BSs such that the coverage to a ground BS and the network energy efficiency can be maximized. In \cite{emc}, authors have proposed a coverage model considering multi-UAV system to achieve energy-efficient communication. They have solved the problem in two steps: coverage
maximization and power control and proved that both the problems fall in the category of exact potential games (EPG). Finally, they have devised an algorithm to do 
energy-efficient coverage deployment using spatial adaptive play (MUECD-SAP) method.

Nonetheless, the aforementioned research works have not explored the benefits of energy harvesting for UAV-assisted wireless communications. A preliminary study in \cite{END} discussed the  concept of Energy Neutral Internet of Drones (enIoD)  to achieve enhanced connectivity by overcoming energy limitations for longer endurance and continuous operation. The authors considered wireless power transfer  to energize the UAVs and thus reducing the gap in harvested and consumed energy. They have also conceptualized special UAVs which are able to carry energy from one charging station to the other using the concept of opportunistic charging (OC). The authors in \cite{extract} proposed the radio energy harvesting at the UAV to improve the endurance time of the UAV. Dirty paper coding and information-theoretic uplink-downlink channel duality  was considered to maximize the network throughput. In \cite{outage},  UAV-based relaying  was considered with energy harvesting capability at the UAV. The outage probability was derived considering different urban environment parameters. Here, the harvested energy comes from the ground base station (GBS). In \cite{resource}, the authors  investigated the resource allocation problem for UAV-assisted networks, where the UAV provides radio frequency energy to the  device-to-device (D2D) pairs. A resource allocation problem to maximize the average throughput of the UAV-assisted D2D network was formulated as a non-convex optimization problem considering the energy causality constraints. In \cite{wpt},  a UAV-enabled two-user wireless power transfer system was considered where the UAV charges multiple energy receivers for specific time period. Via optimization of the UAV trajectory subject to its maximum speed constraints, the authors have minimized the transferred energy to the intended receivers. Authors in \cite{uav-enabled} have solved the similar problem considering multi-user UAV system.

The aforementioned research works are focused on considering energy harvesting through radio frequency sources instead of renewable energy sources. The primary benefits of renewable energy harvesting over the distance-dependent wireless powered networks include (i)~the reduced consumption of network resources (e.g., transmission channel, transmission time, and transmit power) and (ii)~power transfer  between the energy transmitter and the energy receiver is independent of the distance between them. 

Recently, 
in \cite{optimal3D}, the authors  considered maximizing the sum throughput over a given period of time for a solar-powered UAV systems. A mixed-integer non-convex optimization problem was formulated considering energy harvesting, aerodynamic power consumption, finite energy storage system, and quality-of-service (QoS) requirements of the users. They have used monotonic optimization to solve the non-convex problem and attain the optimal 3D-trajectory along with resource allocation. In \cite{spatial}, an energy management framework was proposed for cellular heterogeneous networks (HetNets) supported by solar powered drones to jointly find the optimal trips of the UAVs and the ground BSs that can be turned off to minimize the total energy consumption of the network. UAVs are able to charge their batteries either at a charging station or from harvested solar energy.  Another research work \cite{solar} formulated a framework for energy management in UAV-assisted HetNets. The UAVs have the provision for solar energy harvesting as well as energy charging from fixed charging stations. They have studied the optimal deployment of UAVs to minimize energy consumption in the network.

\subsection{Motivation and Contributions}
The aforementioned research works do not incorporate accurate  models to characterize the renewable energy harvested from solar and wind sources (as a function of specific solar and wind parameters) and therefore rely on  assumptions (e.g. solar energy is modeled as a Gamma random variable in \cite{solar}) for tractability reasons. A plethora of research works consider solar energy harvesting in wireless sensor networks~\cite{harvesting,combining,solarharvested}; however, to the best of our knowledge, there are no concrete statistical models for the solar or wind harvested energy or power and their applications to communication networks are unknown.  In addition, the aforementioned  research works  are focused mainly on the numerical optimization or simulation-based studies. Subsequently, the impact of the network parameters such as time of the day, velocity of the wind, solar radiance cannot be captured on the energy outage at UAVs and transmission outage at ground users. 

The contributions of this paper are summarized as follows:
\begin{itemize}
	\item We develop novel statistical models for the  amount of harvested energy  considering three renewable energy harvesting scenarios, i.e.  (i) solar power, (ii) wind power, and (iii) hybrid solar and wind power. Based on the derived models, we calculate the probability of energy outage at the UAV and signal-to-noise ratio (SNR) outage at ground cellular users.
	\item We derive the closed-form expressions for the probability density function (PDF) and cumulative density function (CDF) of the harvested power from the solar and wind energy sources.  Due to the intricate distribution of the hybrid solar and wind power, we derive the closed-form expressions for the  moment generating function (MGF) of the harvested  solar power and wind power. Then, we applied Gil-Pelaez inversion  to evaluate the energy outage at the UAV and SNR outage at users. 
	\item We formulate the SNR outage minimization problem and obtain closed-form solutions for the transmit power and flight time of the UAV.
	\item We derive the closed-form expressions for the moments of the harvested wind power and solar power and demonstrate their applications in computing new performance metrics considering the scenario when both the amount of energy as well as the energy arrival time is  stochastic. That is, not only the amount of energy is random  but also the time of energy arrival is  random. These performance metrics include  the probability of harvesting energy within the flight duration, average battery charging time, probability of energy outage at the UAV,  and the probability of eventual energy outage (which is the probability of energy outage in a finite duration of time) at the UAV are analyzed. 
	\item Numerical results validate the analytical expressions by providing a comparison with the Monte-Carlo simulations and exhibit interesting insights related to the optimal flight time and transmit power of the UAV as a function of the harvested energy.  
\end{itemize}

\begin{table*}[!h]
\centering
\small
\caption{{\color{black} Mathematical notations} }
\resizebox{\textwidth}{!}{\begin{tabular}{|c|c|c|c|}
\hline
  \textbf{Notation}&\textbf{Description} & \textbf{Notation}&\textbf{Description}\\ \hline
  \footnotesize$T_f$;$T_b-T_f$&\footnotesize Flight time; Hover \& transmission duration & \footnotesize $P_b$;$P_d$&\footnotesize Back-up battery power; Downlink transmit power\\ \hline
  \footnotesize $I(t)$&\footnotesize Radiation intensity & $K_c$& \footnotesize Threshold radiation intensity\\ \hline
  \footnotesize$I_d(t)$&\footnotesize Deterministic fundamental intensity & \footnotesize$\Delta I(t)$& \footnotesize Stochastic attenuation\\ \hline
  \footnotesize$\eta_c$&\footnotesize PV system efficiency & $c;k$& \footnotesize Scale \& shape parameter of Weibull random variable\\ \hline
  \footnotesize$V_{\mathrm{ci}};V_r;V_{\mathrm{co}}$&\shortstack{\footnotesize Cut-in; rated; cut-off wind velocity} &  \footnotesize$A; \rho$& \footnotesize Rotor area; Air density\\ \hline
  $n_p$;$r_p$&\shortstack{\footnotesize Number of propellers; Propeller radius} & $P;P_w$ & \footnotesize Harvested solar power; Harvested wind power \\ \hline
  $E_c$&\shortstack{\footnotesize Energy consumed from battery} & $E_t$ & \footnotesize Energy required for transmission \\ \hline
  $E_f$&\shortstack{\footnotesize Energy required for flight} & $h_d$ & \footnotesize UAV altitude \\ \hline
   $\lambda$&\shortstack{\footnotesize Energy arrival rate} & $A_i$ & \footnotesize Inter-arrival times of the energy packets \\ \hline
  $U(t)$&\shortstack{\footnotesize Accumulated energy
in the battery} & $X_i$ & \footnotesize Energy packet size \\ \hline
$u_0$&\shortstack{\footnotesize Initial battery energy} & $\tau$ & \footnotesize Battery recharge time \\ \hline
$\phi(u_0)$&\shortstack{\footnotesize Eventual energy outage probability} & $r^*$ & \footnotesize Adjustment coefficient \\ \hline

\end{tabular}}
\label{Notation_Summary_mmwave}
\end{table*}
\vspace{2 mm}
\noindent\textbf{Notations}: 
$\Gamma(a)=\int_0^\infty x^{a-1} e^{-x} dx$ represents the Gamma function, ${\Gamma}_u (a;b)=\int_b^\infty x^{a-1} e^{-x} dx$ denotes the upper incomplete Gamma function, ${\Gamma}_l(a;b)=\int_0^b x^{a-1} e^{-x} dx$ denotes the lower incomplete Gamma function and ${\Gamma}(a;b_1;b_2)=\Gamma_u(a;b_1)-\Gamma_u(a;b_2)=\int_{b_1}^{b_2} x^{a-1} e^{-x} dx$ denotes the generalized Gamma function \cite{ebook}. $_2F_1[\cdot,\cdot,\cdot,\cdot]$ denotes the Gauss's hypergeometric function.
$\mathbb{P}(A)$ denotes the probability of event $A$. $f(\cdot)$, $F(\cdot)$, and $\mathcal{L}(\cdot)$ denote the probability density function (PDF), cumulative distribution function (CDF), and Laplace Transform, respectively. Finally, $\mathbb{U}(\cdot)$, $\delta(\cdot)$, and $\mathbb{E}[\cdot]$ denote the unit step function, the Dirac-delta function, and   the expectation operator, respectively. $\mathrm{erf}(x)$ is the error function  expressed as $
\mathrm{erf}(x) = \frac{2}{\sqrt{\pi}} \int\limits_0^x e^{-t^2}dt
$ and $\mathrm{erfc}(x) = 1-\mathrm{erf}(x)$ denotes the complementary error function \cite[8.25/4]{ebook}. A list of important variables is presented in Table~1.

The rest of the paper is organized as follows. The system model and assumptions are stated in Section II. The energy outage and the signal-to-noise ratio (SNR) outage probabilities are evaluated in Sections III and IV, respectively. Section V analyzes the moments of the harvested power and presents several applications of these moments including evaluation of  probability of battery charging, average battery charging time, and eventual energy outage probability. Section VI presents the numerical results before the paper is concluded in Section VII.  

\section{System Model and Assumptions}
In this section, we describe the network model, the air-to-ground (AtG) channel propagation model, the UAV energy consumption model, and the harvested energy models for solar and wind energy sources.

\subsection{Network Model}

We consider a UAV-enabled with solar and wind energy harvesting capability that serves ground cellular users on orthogonal transmission channels. The users are distributed uniformly  in a circular region of area $A =\pi R^2$, where $R$ represents the radius of the considered circular region. The UAV harvests energy from the solar and/or  wind energy  depending on the energy harvesting model. The UAV operates in two states:  (i) traveling to the desired location for transmission while harvesting energy, (ii) hovering and transmitting to the cellular users or  traveling back to the charging station (which is located at the origin)  to charge itself if needed.  A  duration of $T_b$ is considered in which the UAV travels for duration $T_f$  and hovers at the destination  for a duration $T_b-T_f$ to perform downlink transmission given there is no energy outage. Since the maximum distance a UAV can travel is $R_m$ in a straight line trajectory from the charging station, $T_b$ is set as $T_b=R_\mathrm{m}/v_d$. The UAV travels with the speed $v_d$ (in m/sec) and is equipped with the fixed back-up battery power $P_b$ to support the UAV flight (in case if the harvested power is not enough).  The UAV performs data transmission to the cellular users in the downlink using transmit power $P_d$ for the time duration $T_b - T_f$ given that there is no energy outage.

\begin{figure}[t]
	\begin{center}
		\includegraphics[scale=.5]{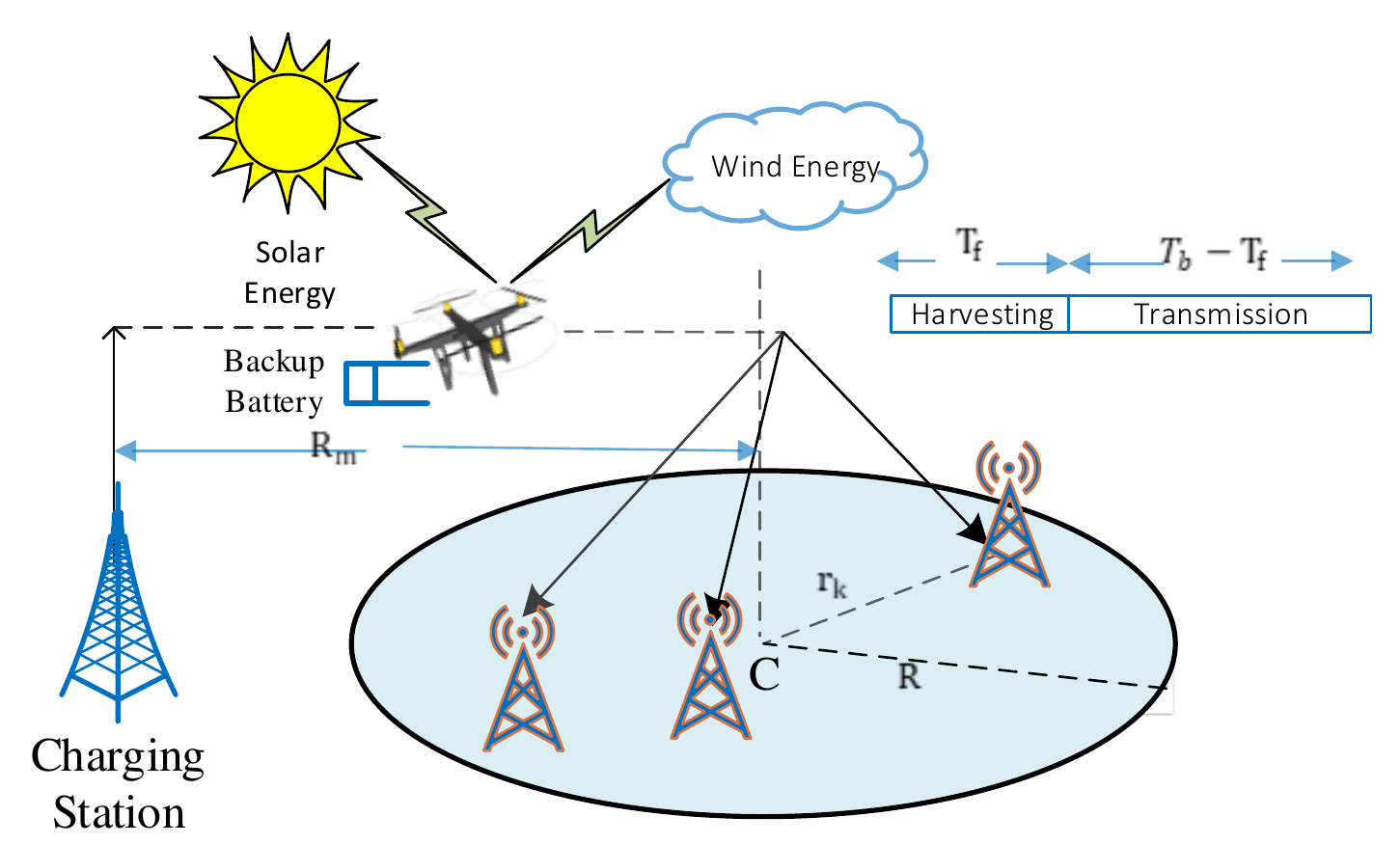}
	\end{center}
	\caption{System model. }
	\label{fig8}
\end{figure} 

\subsection{Air-to-Ground (AtG) Channel Model}
The RF signals generated by the UAV first travels through the free space until they reach the man-made urban environment, where additional losses (referred to as excessive path-loss) occur due to foliage and/or urban environment. The excessive path-loss is  random in nature and cannot be characterized by a well-known distribution. As such, the mean value of excessive path-loss $\eta_{\epsilon}$  obtained from empirical distribution fitting is typically considered. The RF transmissions from a given UAV fall into  three propagation groups, Line-of-Sight (LOS) propagation, non-LOS (NLOS) propagation via strong reflection and refraction, and a very limited contribution (less than 3\% as reported in \cite{MAG}) by the deep fading resulting from consecutive reflections and diffraction. As such, the third group has been discarded in most of the relevant research studies. Since the excessive path-loss depends largely on the first two propagation groups, $\eta_{\epsilon}$   can be considered as a constant that can be obtained by averaging all samples in a certain propagation group. The values of $\eta_{\epsilon}$ are listed for various frequencies and urban environments in \cite[Table II]{MAG}.

The AtG path-loss can thus be defined as follows~\cite{OLA}:
\begin{equation}
\mathrm{PL}_{\epsilon}=\mathrm{FSPL}+ \eta_{\epsilon},
\end{equation}
where $\epsilon \in \{\mathrm{LOS}, \mathrm{NLOS}\}$ and free space path-loss (FSPL) can be evaluated using the standard Friis equation, i.e.
$
\mathrm{FSPL}= 20 \mathrm{log}_{10} \left(\frac{4 \pi f_c d}{c}\right),
$
where $f_c$ is the carrier frequency (Hz), $c$ is the speed of light (m/s), and $d$ is the distance between the UAV and  the receiving user. The probability of having LOS for a user $i$ depends on the altitude $h_d$ of the serving UAV  and the horizontal distance between the UAV and $i^{\mathrm{th}}$ user, which is $r_i=\sqrt{(x_D-x_i)^2 + (y_D-y_i)^2}$. The $i^{\mathrm{th}}$ user is located at $(x_i,y_i,0)$ and the UAV is located at $(x_D,y_D,h_d)$. The LOS probability is thus given by: 
\begin{equation}
P_{\mathrm{LOS}}(h_d,r_i)=\frac{1}{{1+a \mathrm{exp}\left(-b\left(\mathrm{arctan}\left(\frac{h_d}{r_i}\right)-a\right)\right)}},
\end{equation}
where $\mathrm{arctan}(\frac{h_d}{r_i})$ is the elevation angle between the UAV and the served user (in degrees). Here $a$ and $b$ are constant values that depend on the choice of urban environment (high-rise urban, dense urban, sub-urban, urban). They are also known as S-curve parameters as they are obtained by approximating the LOS probability ({\em given by  International Telecommunication Union (ITU-R)} \cite{ITU,OLA}) with a simple modified
Sigmoid function (S-curve). Subsequently, the approximate LOS probability can be given for various urban environments while capturing the buildings' heights distribution, mean number of man made structures, and percentage of the built-up land area, of the considered urban environment.
The NLOS probability is given as
$
P_{\mathrm{NLOS}}(h_d,r_i)=1-P_{\mathrm{LOS}}(h_d,r_i).
$

The path-loss expression can then be written as~\cite{E3D}:
\begin{equation}
L(h_d,r_i)=20 \mathrm{log} \left(\sqrt{h_d^2+r_i^2}\right)+AP_{\mathrm{LOS}}(h_d,r_i)+B,
\label{eq}
\end{equation}
where $A = \eta_{\mathrm{LOS}}- \eta_{\mathrm{NLOS}}$, $B = 20 \mathrm{log}\left( \frac{4\pi f_c}{c} \right) + \eta_{\mathrm{NLOS}}$,
$\eta_{\mathrm{LOS}}$ and $\eta_{\mathrm{NLOS}}$ (in dB) are, respectively, the losses corresponding to the LOS and non-LOS reception depending on the environment.  The considered AtG propagation model can capture various environments (such as high-rise urban, dense urban, sub-urban, urban)~\cite{OLA, E3D, OTT, MIT, UAVU}. 

\subsection{Harvested Power Model}
\subsubsection{Solar Power Model}
The output  power from the Photo Voltaic (PV) system depends on the solar radiation intensity, the solar cell temperature, and the PV system efficiency. The output power of the PV system $P$ at time $t$ can therefore be modeled as follows\cite{RED}:
\[
    P= 
\begin{cases}
    \frac{\eta_c}{K_c}I(t)^2, &  0<I(t)<K_c\\
    \eta_c I(t),              & I(t) \geq K_c
\end{cases},
\label{powersolar}
\]
where $I(t)$ denotes the radiation intensity and $K_c$ is a threshold for the radiation intensity beyond which the efficiency $\eta_c$ can be approximated as a constant. Note that  $I(t)$ can be calculated as a sum of deterministic fundamental intensity $I_d(t)$ and stochastic attenuation  $\Delta I(t)$ due to weather effects as well as clouds occlusion, i.e. $I(t)=I_d(t)+\Delta I(t)$. Generally, $I_d(t)$ depends on the time of a day and the months/seasons of a year and can be found from the following equation\cite{RED}:
\[
    I_d(t)= 
\begin{cases}
    I_{\mathrm{max}}(-1/36 t^2+2/3 t-3),& \text{if } 6 \leq t <18\\
    0,              & \text{otherwise}
\end{cases}.
\]
The distribution of $\Delta I$ follows a standard normal distribution, i.e.
$\Delta I \sim N(0, 1)$. Then  the distribution of $I(t)$ is a normal distribution with shifted mean as expressed below:
\begin{equation}
f_I(I)=\frac{1}{\sqrt{2\pi}} \mathrm{exp}\left(\frac{(I-I_d)^2}{2}\right).
\end{equation}

\subsubsection{Wind Power Model}
To estimate the wind energy potential, the velocity of wind is a primary variable and is typically modeled by the Weibull distribution as shown below\cite{Reliability}:
\begin{equation}
f_V(v)=\frac{k}{c} \left(\frac{v}{c}\right)^{k-1} \mathrm{e}^{-(v/c)^k},
\end{equation}
where $v$ is the speed of wind, $c$ and $k$ denote the  scale and  shape parameter of the Weibull random variable, respectively. Physically speaking,  $c$ indicates the wind strength of the considered location and $k$ is the peak value of the wind distribution. The CDF of Weibull variable is given by
\begin{equation}
F_V(v)=1-\mathrm{exp}[-(v/c)^k],
\end{equation}
where  $k=(\frac{\sigma}{\mu})^{-1.086}$ and $c=\frac{\mu}{\Gamma (1+1/k)}$.

The energy produced by a wind turbine generator can be obtained by means of its power curve, where the relationship between the wind speed and the delivered power can be established as shown below\cite{Reliability}:
\[
    P_w = 
\begin{cases}
    q(v),   & V_{ci}<v\leq V_r\\
    P_r, & V_r<v \leq V_{co}\\
    0, & \mathrm{elsewhere}
\end{cases},
\]
where $P_w$ and $v$ denote the output power of the wind turbine and
the wind velocity, respectively. $V_{ci}, V_r$ and $V_{co}$ represent the cut-in wind velocity, rated wind velocity, and cut-off wind velocity, respectively. The non-linear part of the power curve can be defined as follows\cite{online}:
\begin{equation}
q(v)=\frac{1}{2} \rho A C v^3,
\end{equation}
where $C$ is a constant equivalent to the power coefficient, $A$ is the rotor area, and $\rho$ denotes the air density.

\subsection{UAV Energy Consumption Model}
Based on the momentum disk theory and blade element theory, the power consumption model for the UAV in hover state can be defined. This model considers the power consumption due to thrust $T$ which is defined as a force  to move an aircraft through the air. During hover, it can be assumed that the thrust  is approximately the same as the total weight force $W$ in Newton, i.e. $T=W$. In the case of a hovering aircraft, reaction force is approximately equal to the gravitational force. As a result, we can assume $W = m g$, where $m$ is the UAV mass (in kg) and $g$ denotes earth gravity (in m/s$^2$).
The mechanical power $P_{\mathrm{hov}}$ of a system that exerts a force $T$ on an object moving with velocity $v_d$ can be defined as~\cite{energy-efficient}:
\begin{equation}
P_{\mathrm{hov}}=T v_d = mg v_d =\sqrt{\frac{(m g)^3}{2 \pi r_p^2 n_p \rho}},
\end{equation}
where $A$ denotes the rotor disk area and $\rho$ is air density in kg$/$m$^3$, $A=\pi r_p^2 n_p$, $n_p$ and $r_p$ denote the number of propellers and propeller radius, respectively.

During transmission period $T_b-T_f$, the UAV has to spend the power for hovering as well as for transmitting to the desired users. On the other hand, during flight the UAV needs to hover and spend some energy to keep the UAV active. The energy consumption at UAVs can thus be modeled as follows:
\begin{itemize}
\item Energy consumed during transmission (while hovering), $E_t=(P_{\mathrm{hov}}+P_d)(T_b-T_f)$.
\item Energy consumed during return flight (no transmission), $E_f=(P_{\mathrm{hov}}+\gamma_d)T_f$.
\end{itemize}
where $\gamma_d$ denotes the activation energy, $T_f$ denotes the flight time, and $P_d$ denotes the transmit power of the UAV. 

\subsection{Actual Energy Consumption from the UAV Battery}
If the harvested solar energy $E_h=P T_f$ or harvested wind energy  $E_h=P_w T_f$ is high enough to support the $E_f$ and $E_t$, no energy needs to be consumed from the battery. Therefore, the net energy consumption (or energy cost) from the battery $E_c$ is zero. Another scenario can happen where the harvested energy levels are not sufficient. However, after combining with the UAV backup battery, the transmission and return flight  can still be supported. In this case, after spending all the harvested energy, we need to consume the deficit energy from the UAV battery. The net energy consumption becomes $E_c=E_t+E_f-E_h$, where $E_t$ is the energy required for transmission, $E_f$ is the energy required for the return flight, and $E_h$ denotes the harvested energy. Finally, if the harvested energy along with the battery energy cannot support transmission,  the UAV will not transmit and fly back to the charging station and consume only the energy for  return flight  $E_f$. Finally, there could be a scenario when the harvested energy and battery energy are insufficient to support the  return flight, however, we consider that the back-up battery is designed to support the emergency return flight. The net energy consumption is as follows:
\begin{equation}
E_c=
\begin{cases}
0, & E_h>E_t+E_f\\
E_t+E_f-E_h, & E_h+E_b>E_t+E_f\\
E_f-E_h,  & E_h+E_b>E_f
\end{cases}.
\end{equation}

\section{Characterization of the Energy Outage}
In this section, we characterize the energy outage probability of the UAV considering (i) solar power, (ii) wind power, and (iii) hybrid solar and wind power. We derive the PDF and CDF of the harvested energy from the solar and wind power sources and then  characterize their respective energy outages. For hybrid solar and wind power, we derive the Laplace Transforms of the solar and wind harvested power and use Gil-Pelaez inversion to characterize the energy outage.

The UAV is declared to be in energy outage if the battery energy along with the harvested energy is not enough to support the energy consumption of the return flight and the data transmission. Mathematically, the energy outage probability $E_{\mathrm{out}}$ can be defined as follows:
\begin{align}
E_{\mathrm{out}}&=\mathbb{P}[P_b T_b + H T_f < E_t+E_f],
\nonumber\\
&=\mathbb{P}\left[H<\frac{E_t+E_f  - P_b T_b}{T_f}\right],
\nonumber\\
&=\mathbb{P}\left[H<\theta\right],
\label{out1}
\end{align}
where $\theta=\frac{E_t+E_f  - P_b T_b}{T_f}$ and $H$ is the harvested power and $H=P$ if the source of energy is solar, $H=P_w$ if the source of energy is wind, and $H=P+P_w$ in case of harvesting from both solar and wind energy.

\subsection{Harvested Power - Solar and Wind}
The harvested power from solar energy is a function of the time of the day and months of the year, solar radiance, and efficiency of the photo-voltaic system.
The distribution of  the harvested solar power $P(t)$ can be written as in the following:
\begin{theorem}[Distribution of the Harvested Solar Power] The PDF of the harvested solar power can be given as follows:
\[
    f_P(p)= 
\begin{cases}
    \frac{1}{2} \sqrt{\frac{K_c}{\eta_c p}} f_I\left(\sqrt{\frac{K_c p}{\eta_c }}\right),&  p<\eta_c K_c\\
    \frac{1}{\eta_c}  f_I(p/\eta_c),              & \text{otherwise}.
\end{cases}
\]
\end{theorem}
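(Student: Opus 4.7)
The plan is to obtain the density of $P$ directly from the density of $I$ by applying the standard univariate transformation formula $f_P(p)=f_I(g^{-1}(p))\,\bigl|\tfrac{d}{dp}g^{-1}(p)\bigr|$ piecewise, since $P=g(I)$ is defined piecewise with each piece being strictly monotone on its domain. The two branches do not interact because they correspond to disjoint ranges of $I$, and the break-point in $I$ (namely $I=K_c$) is sent by $g$ to a single break-point in $p$ (namely $p=\eta_c K_c$), which will separate the two regimes in the stated density.

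First I would treat the low-intensity regime $0<I<K_c$, where $p=\tfrac{\eta_c}{K_c}I^{2}$. On this interval $g$ is strictly increasing (its derivative $2\eta_c I/K_c$ is positive), so I invert to get $I=\sqrt{K_c p/\eta_c}$, compute $\tfrac{dI}{dp}=\tfrac{1}{2}\sqrt{K_c/(\eta_c p)}$, and note that the image of $(0,K_c)$ under $g$ is exactly $(0,\eta_c K_c)$. Plugging $I(p)$ and the Jacobian into the transformation formula reproduces the first branch of the theorem statement. Next I would handle the high-intensity regime $I\ge K_c$, where $p=\eta_c I$ is affine; here $I=p/\eta_c$, the Jacobian is simply $1/\eta_c$, and the image is $[\eta_c K_c,\infty)$, giving the second branch directly.

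The main subtlety, and the step that needs an explicit remark, is verifying that the piecewise construction really yields a density on the whole positive real line without contributions spilling between branches. This requires checking (i) that $g$ is continuous at $I=K_c$, so that no probability mass is created or lost at $p=\eta_c K_c$, and (ii) that one is allowed to restrict attention to $I>0$ in the quadratic branch so that $I\mapsto I^{2}$ is monotone there, rather than having to sum contributions from $\pm\sqrt{K_c p/\eta_c}$. The former is immediate from $\tfrac{\eta_c}{K_c}K_c^{2}=\eta_c K_c$; the latter is justified by the physical convention that negative radiation intensity contributes no harvested power, so the case $I\le 0$ is excluded from the support of $P$ in this model. Once these two points are noted, the two computed pieces concatenate without overlap into the density claimed in the theorem.
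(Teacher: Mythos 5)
Your proposal is correct and follows essentially the same route as the paper: a piecewise single-variable transformation split at $I=K_c$, with the quadratic branch inverted as $I=\sqrt{K_c p/\eta_c}$ (negative root discarded since intensity is nonnegative) and the affine branch giving the Jacobian $1/\eta_c$, yielding the two regimes $p<\eta_c K_c$ and $p\geq \eta_c K_c$. The only difference is cosmetic: the paper derives the same Jacobian factors by writing $f_P(p)=\int \delta\left(p-g(I)\right)f_I(I)\,dI$ and applying Dirac-delta identities, whereas you invoke the standard change-of-variables formula directly, and your explicit remarks on continuity at $I=K_c$ and on restricting to $I>0$ correspond to the paper's step of discarding the negative root.
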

\begin{proof}
See {\bf Appendix~A.}
\end{proof}
On the other hand, the distribution of the harvested power from wind energy can be derived as follows:
\begin{theorem}[Distribution of the Harvested Wind Power] Letting $a=\frac{1}{2}\rho A C$,
the PDF of the harvested wind power can be derived as follows:
\begin{align*}
&{f_{P_W}(p_w)=\frac{k {p_w}^{\frac{k}{3}-1} \mathrm{exp}\left(-\frac{p_w^{\frac{k}{3}}}{a^{\frac{k}{3}}c^k}\right)}{3 c^k a^{\frac{k}{3}}}  }\mathbb{U}(a V^3_{ci} < p_w < a V_r^3) +\\&\Scale[1]{\delta(p_w-P_r)(
F(V_{co})-F(V_r))+
\delta(p_w)(
F(V_{ci})+1-F(V_{co}))}.
\end{align*}
Essentially, the last two terms represent the case when the harvested power becomes equal to the maximum constant rated power and the case when the harvested power is zero.
\end{theorem}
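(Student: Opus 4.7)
The plan is to treat $P_w$ as a function of the Weibull-distributed wind speed $V$ that is piecewise defined on three disjoint velocity intervals, handle each interval separately, and then reassemble the PDF using Dirac deltas to capture the mass that collapses onto single points. Concretely, I would partition the real line into $\{v \le V_{ci}\}$, $\{V_{ci} < v \le V_r\}$, $\{V_r < v \le V_{co}\}$, and $\{v > V_{co}\}$, corresponding respectively to $P_w = 0$, $P_w = av^3$, $P_w = P_r$, and $P_w = 0$.

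On the middle (continuous) branch, the map $v \mapsto av^3$ is strictly monotone with inverse $v = (p_w/a)^{1/3}$ and Jacobian $dv/dp_w = \tfrac{1}{3}a^{-1/3}p_w^{-2/3}$. I would apply the standard change-of-variables formula
\begin{equation*}
f_{P_w}(p_w) = f_V\!\left((p_w/a)^{1/3}\right)\cdot \frac{1}{3 a^{1/3} p_w^{2/3}},
\end{equation*}
substitute the Weibull density $f_V(v)=\frac{k}{c}(v/c)^{k-1}e^{-(v/c)^k}$, and then simplify the powers of $p_w$ and $a$: the factor $(p_w/a)^{(k-1)/3}\cdot a^{-1/3} p_w^{-2/3}$ collapses to $p_w^{k/3-1}a^{-k/3}$, while the exponent $((p_w/a)^{1/3}/c)^k$ becomes $p_w^{k/3}/(a^{k/3}c^k)$. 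This yields precisely the continuous part of the claimed density, valid on the image $aV_{ci}^3 < p_w < aV_r^3$ of the middle velocity interval, which I would encode with the indicator $\mathbb{U}(\cdot)$.

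On the flat branch $V_r < v \le V_{co}$ the output is the constant $P_r$, so the induced distribution places a point mass of size $\mathbb{P}(V_r < V \le V_{co}) = F(V_{co}) - F(V_r)$ at $p_w = P_r$, which I would express as a Dirac delta $\delta(p_w - P_r)$ weighted by this probability. The two outer branches $\{v \le V_{ci}\}\cup\{v>V_{co}\}$ both send $V$ to $0$, contributing a single atom at $p_w = 0$ of weight $F(V_{ci}) + 1 - F(V_{co})$, again captured by $\delta(p_w)$. Summing the continuous piece with the two delta contributions reproduces the theorem statement.

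There is no real conceptual obstacle here; the only step requiring care is the algebraic simplification in the change-of-variables expression, especially tracking the exponents $k$ and $1/3$ so that the $a^{-k/3}$ and $p_w^{k/3-1}$ factors emerge cleanly. A minor bookkeeping subtlety is to confirm that the three components integrate to one, which follows because $F_V(V_r)-F_V(V_{ci})$ equals the integral of the continuous piece by construction of the transformation, and the two atoms absorb the remaining Weibull mass exactly.
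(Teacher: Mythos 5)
Your proposal is correct and follows essentially the same route as the paper: a single-variable transformation of the Weibull speed on the monotone branch $p_w = a v^3$ (the paper implements the same Jacobian computation via the Dirac-delta identity $\delta(f(x))=\sum_i \delta(x-a_i)/|f'(a_i)|$), together with point masses $\delta(p_w-P_r)\,[F(V_{co})-F(V_r)]$ and $\delta(p_w)\,[F(V_{ci})+1-F(V_{co})]$ for the flat and cut-off regions. Your algebra for the continuous part and the atom weights matches the theorem, so no gap.
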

\begin{proof}
See {\bf Appendix B.}
\end{proof}

\subsection{Energy Outage}
\subsubsection{Solar Power} In order to characterize the probability of  energy outage, first we derive the CDF $F_P(p)=\int_{0}^p f_P(t) dt$ of the harvested solar power using {\bf Theorem~1} as follows:
\begin{equation}
    F_P(p)= 
\begin{cases}
    \frac{1}{2} \left(\mathrm{erf}\left[\frac{I_d}{\sqrt{2}}\right] -\mathrm{erf}\left[\frac{I_d-\sqrt{\frac{K_c p}{\eta_c} }}{\sqrt{2}}\right]\right),& p<\eta_c K_c\\
    \frac{1}{2} \left(\mathrm{erf}\left[\frac{I_d-K_c}{\sqrt{2}}\right] +\mathrm{erf}\left[\frac{p-I_d \eta_c}{\sqrt{2} \eta_c}\right]\right),              & p>\eta_c K_c
\end{cases}.
\label{cdfsolar}
\end{equation}
Using the CDF of the harvested power in \eqref{cdfsolar} and the definition of energy outage event given in \eqref{out1}, we can evaluate the energy outage as follows:
\begin{equation}
E_{\mathrm{out}}= F_P\left(\theta\right).
\label{out2}
\end{equation}

\subsubsection{Wind Power} 
Similarly, the probability of energy outage with wind harvesting can be derived by  first  determining the CDF of the harvested wind power using {\bf Theorem~2} as:
\begin{align}
&F_{P_w}(p_w)=\left[\gamma \left(1,\frac{p_w^{\frac{k}{3}}}{a^{\frac{k}{3}}c^k} \right)- \gamma \left(1,\frac{V_{\mathrm{ci}}^{\frac{k}{3}}}{a^{\frac{k}{3}}c^k} \right)\right]+\nonumber\\
&\Scale[1]{\mathbb{U}(p_w)[F(V_{ci})+1-F(V_{co})]
+\mathbb{U}(p_w-P_r)[F(V_{co})-F(V_r)]}.
\label{out3}
\end{align}
Using the CDF of the harvested power in \eqref{out3} and the definition of energy outage event given in \eqref{out1}, we can evaluate the energy outage as $E_{\mathrm{out}}=F_{P_w}\left(\theta\right)$.

\subsubsection{Hybrid Solar-Wind Power}
Evaluating the PDF of the  sum of the harvested solar and wind power is not tractable due to the convolution of  the distributions of the solar and wind powers. Therefore, we utilize an MGF-based approach to evaluate the energy outage. That is, we propose to use Gil-Pelaez inversion theorem to characterize energy outage. Note that the solar and wind powers are independent random variables. Therefore, the Laplace Transform of the total harvested power  can be given by the product of the  Laplace Transforms of the solar power and wind power. For this, we first derive the Laplace Transforms (or MGF) of the solar power and wind power and then determine the energy outage using Gil-Pelaez inversion theorem.
\begin{theorem}[Laplace Transform of Solar Power $P$ and Wind Power $P_w$]
The Laplace transform $\mathcal{M}_P(s) =\int_0^\infty e^{-s P} f_P(p) dp$ of the  harvested power from solar energy can be derived as follows:
\begin{align}
&\mathcal{L}_P(s)=
\frac{1}{2} e^{s \eta (\frac{s\eta}{2} -I_d)} \mathrm{erfc}\left[\frac{K_c-I_d+s\eta_c}{\sqrt{2}}\right] 
+\nonumber\\ 
&\frac{e^{-\frac{{I_d}^2}{2} + \frac{ I_d^2}{ 2+\frac{4 s \eta_c}{K_c}}}\left(
\mathrm{erf}\left[ \frac{I_d}{\sqrt{2+\frac{4 s \eta_c}{K_c}}}\right]
+ \mathrm{erf}\left[ \frac{I_d-K_c-2 s \eta_c}{\sqrt{2+\frac{4 s \eta_c}{K_c}}}\right]\right)}{2 \sqrt{2s+K_c/\eta_c}}. 
\end{align}
The Laplace transform of the harvested power from wind energy can be given as follows:
\begin{align*}
&\mathcal{L}_{P_w}(s)
=\sum\limits_{n=0}^\infty \frac{(-s a c^3 )^n}{n!} \left(\frac{3n}{k}\right)!.
\end{align*}
\end{theorem}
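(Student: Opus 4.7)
The plan is to compute $\mathcal{L}_H(s)=\int_0^\infty e^{-sp}f_H(p)\,dp$ directly from the PDFs provided by Theorems 1 and 2. Because the solar and wind distributions have very different structure, I would handle the two cases by separate arguments: the solar one reducing to Gaussian tail integrals, and the wind one reducing to a moment series.

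For the solar power $P$, the PDF of Theorem 1 is piecewise with a break at $p=\eta_c K_c$, so I would split the Laplace integral into a body piece on $[0,\eta_c K_c)$ and a tail piece on $[\eta_c K_c,\infty)$. In the tail piece, the substitution $u=p/\eta_c$ turns the integrand into $e^{-s\eta_c u}f_I(u)$; completing the square in $-s\eta_c u-(u-I_d)^2/2$ yields a shifted Gaussian with mean $I_d-s\eta_c$ and the constant $s\eta_c(s\eta_c/2-I_d)$, so integrating from $K_c$ to $\infty$ produces the first summand via the identity $\int_{K_c}^\infty (1/\sqrt{2\pi})e^{-(u-m)^2/2}\,du=\tfrac12\mathrm{erfc}((K_c-m)/\sqrt{2})$. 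In the body piece, the substitution $v=\sqrt{K_c p/\eta_c}$ eliminates the $1/\sqrt{p}$ weight and gives $\int_0^{K_c} e^{-s\eta_c v^2/K_c}f_I(v)\,dv$; completing the square in $-s\eta_c v^2/K_c-(v-I_d)^2/2$ produces a Gaussian with modified inverse variance $\alpha:=1+2s\eta_c/K_c$, and a standard rescaling $u=\sqrt{\alpha}(v-I_d/\alpha)$ turns the bounded integral into the difference of two error functions, which after collecting prefactors is the second summand.

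For the wind power $P_w$, I would use the continuous Weibull-like part of $f_{P_w}$ from Theorem 2 and treat it as supported on $(0,\infty)$, which is precisely what makes the compact series identity hold; the two delta contributions at $0$ and $P_r$ (and the truncation to $[aV_{ci}^3,aV_r^3]$) are neglected. Expanding $e^{-sp_w}=\sum_{n\ge 0}(-sp_w)^n/n!$ and interchanging sum and integral gives $\mathcal{L}_{P_w}(s)=\sum_{n\ge 0}(-s)^n\,\mathbb{E}[P_w^n]/n!$. Each raw moment is then evaluated by the substitution $y=p_w^{k/3}/(a^{k/3}c^k)$, which converts the remaining integral into a Gamma integral and yields $\mathbb{E}[P_w^n]=(ac^3)^n\,\Gamma(3n/k+1)=(ac^3)^n(3n/k)!$; reinserting this in the sum gives the stated expression.

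The main obstacle is algebraic bookkeeping on the solar side: the completed-square computation in the body piece has to be carried out carefully so that the shifted-Gaussian prefactor $e^{-I_d^2/2+I_d^2/(2\alpha)}$, the $\sqrt{\alpha}$ rescaling factor, and the two resulting error-function arguments simultaneously match the forms stated in the theorem (with $2\alpha=2+4s\eta_c/K_c$). A secondary concern is the wind case, where I would explicitly flag that (i) the sum-integral swap is legitimate for $|s|$ in a neighborhood of $0$ by absolute convergence and extends by analytic continuation, and (ii) the stated transform is for the Weibull-type continuous component only, not for the full mixture in Theorem 2.
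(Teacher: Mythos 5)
Your plan is correct and essentially reproduces the paper's own proof (Appendices~C and D): the same split of the solar transform at $p=\eta_c K_c$, with the tail and body pieces reduced to Gaussian integrals (your completing-the-square is exactly what the paper's cited identities \cite[3.322/1--2]{ebook} encode), and the same series-expansion-plus-termwise-Gamma-integral evaluation for the wind part, where the paper likewise silently drops the delta masses and the truncation to $[aV_{ci}^3,aV_r^3]$ that you flag explicitly. One caveat: carrying your body-piece computation through carefully yields the prefactor $\tfrac{1}{2\sqrt{1+2s\eta_c/K_c}}$ and the erf argument $(K_c+2s\eta_c-I_d)/\sqrt{2+4s\eta_c/K_c}$, so it will not literally match the printed second summand of the theorem (whose denominator $2\sqrt{2s+K_c/\eta_c}$ and the sign of the second erf argument are typographical slips, as a check at $s=0$ shows); your derivation gives the correct form and you should not force agreement with the printed expression.
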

\begin{proof}
See {\bf Appendices~C and D.}
\end{proof}
The Laplace Transform of the total power harvested from solar and wind sources can be given as follows:
\begin{equation}\label{combined}
\mathcal{L}_T(s)=\mathcal{L}_{P_w}(s) \mathcal{L}_P(s).
\end{equation}
The characteristic function $\phi_T(s)$ of the total power harvested from solar and wind sources can then be derived  by substituting $s=j \omega$ in the expressions derived in {\bf Theorem~3} and then substituting them into \eqref{combined}. The energy outage $E_{\mathrm{out}}$ can then be given by applying the Gil-Pelaez inversion theorem as:
\begin{align}
&E_{\mathrm{out}}=
\frac{1}{2}+\frac{1}{ \pi }\int_0^\infty \frac{\mathrm{Im}[\mathcal{\phi}_{T} (j\omega)) e^{-j \theta \omega}]}{\omega} d \omega,
\end{align}
where $\mathrm{Im}[\cdot]$ represents the imaginary operator.

\section{SNR Outage Analysis}
When the signal-to-noise ratio (SNR) of a ground cellular user falls below a certain target threshold $\mathrm{SNR_{\mathrm{th}}}$, the user is said to be in $\mathrm{SNR}$ outage.
Mathematically, the SNR outage can be defined as follows:
\begin{equation}
S_{\mathrm{out}}=\mathbb{P}(\mathrm{SNR}\leq \mathrm{SNR_{\mathrm{th}}}),
\end{equation}
where $\mathrm{SNR}_{\mathrm{th}}$ is the SNR threshold for a specific ground user so that the user can achieve minimum data rate as  \begin{equation}
    R_{\mathrm{th}}= \frac{T_b-T_f}{T_b}\mathrm{log}_2(1+\mathrm{SNR_{\mathrm{th}}}).
\end{equation} 
Subsequently,  $\mathrm{SNR}_{\mathrm{th}}$ can be calculated as $\mathrm{SNR}_{\mathrm{th}} = 2^{\frac{T_b R_{\mathrm{th}}}{T_b-T_f}}-1$. From the path-loss model in Section~II,  $\mathrm{SNR}$ is given  as:
\begin{equation}
\mathrm{SNR}=\frac{L(h_d,r_i) P_d \chi}{N_0},
\end{equation}
{where $\chi$ denotes Gamma distributed channel fading. We consider Nakagami-m fading since it provides a generalized model that mimics various fading environments\cite{downlinkcoverage}.} Thus, conditioned on the distance of the ground user from the UAV,  we can derive the SNR outage as follows:
\begin{align}\label{out1}
\begin{split}
S_{\mathrm{out}}(r) &=\mathbb{E}_r\left[\mathbb{P}\left(\mathrm{SNR} \leq \mathrm{SNR}_{\mathrm{th}}\right)\right],\\&= \mathbb{E}_r\left[\mathbb{P}\left(\frac{L(h_d,r_i) P_d \chi}{N_0} \leq \mathrm{SNR}_{\mathrm{th}}\right)\right],\\&=
\mathbb{E}_r\left[\mathbb{P}\left(\chi \leq \frac{\mathrm{SNR}_{\mathrm{th}}N_0}{L(h_d,r_i) P_d} \right)\right],
\\&=\mathbb{E}_r\left[\frac{\gamma(m,\frac{\mathrm{SNR}_{\mathrm{th}}N_0}{\Theta L(h_d,r_i) P_d})}{\Gamma(m)}\right],
\end{split}
\end{align} 
where $m$ and $\Theta$ are shape and scale parameters of Gamma distribution, respectively.
For Rayleigh fading channels, we can simplify the SNR outage probability as follows:
\begin{align}\label{out2}
S_{\mathrm{out}}(r)=1-e^{-\frac{\mathrm{SNR}_{\mathrm{th}}N_0}{L(h_d,r_i) P_d}},
\end{align} 
where $L(h_d,r_i)$ (in dB) can be expressed  as follows:
\begin{equation}
L(h_d,r_i)=\frac{c^2}{y(4 \pi f_c)^2 (h_d^2+r_i^2)}  \left(\frac{y}{x}\right)^{P_{\mathrm{LOS}}(h_d,r_i)},
\end{equation}
and $x= 10^{ \frac{\eta_{\mathrm{LOS}}} {10}}$, 
$y= 10^{ \frac{\eta_{\mathrm{NLOS}}} {10}}$, and $P_{\mathrm{LOS}}(h_d,r_i)=({1+a \mathrm{exp}(-b(\mathrm{arctan}(\frac{h_d}{r_i})-a))})^{-1}$.
Here $\mathrm{arctan}\left(\frac{h_d}{r_i}\right)$ is the elevation angle between the UAV and the served user (in degrees).  Also, $a$ and $b$ are constant values based on the choice of the urban environment. 

We now consider the problem of $\mathrm{SNR}$ outage minimization to optimize the transmit power of the UAV $P_d$. The problem can be formulated as follows:
\begin{equation*}
\begin{aligned}
& \underset{P_d}{\text{minimize}}
& & \frac{\gamma(m,\frac{\mathrm{SNR}_{\mathrm{th}}N_0}{\Theta L(h_d,r_i) P_d})}{\Gamma(m)},\\
& \text{subject to}
& & P_d \leq \frac{(P_b T_b+H T_f) - (P_{\mathrm{hov}}+\gamma_d)T_f}{T_b-T_f},
\\
& \text{}
& & T_f<T_b.
\end{aligned}
\end{equation*}
The first constraint imposes a restriction on the transmit power of the UAV. We know that, the total available UAV energy is the sum of battery energy and harvested energy in time $T_f$, i.e. $P_b T_b+ H T_f$. And the total available UAV energy should be greater than the energy required during the flight as well as the transmission time $(P_d+P_{\mathrm{hov}})(T_b-T_f)+(P_{\mathrm{hov}}+\gamma_d)T_f$. As a result, we have an upper bound on the transmit power of the UAV. The second constraint denotes the fact that a UAV can not have a flight time which exceeds the observation period $T_b$. Since the lower incomplete Gamma function is monotonically decreasing function with respect to increasing $P_d$, the optimal solution for $P_d$ will lie at the boundary, i.e.
\begin{equation}
    P_d^*= \frac{(P_b T_b+H T_f) - (P_{\mathrm{hov}}+\gamma_d)T_f}{T_b-T_f}.
\end{equation}
The first constraint imposes a restriction on the transmit power of the UAV. We know that, the total available UAV energy is the sum of battery energy and harvested energy in time $T_f$, i.e. $P_b T_b+ H T_f$. And the total available UAV energy should be greater than the energy required during the flight as well as the transmission time $(P_d+P_{\mathrm{hov}})(T_b-T_f)+(P_{\mathrm{hov}}+\gamma_d)T_f$. As a result, we have an upper bound on the transmit power of the UAV. The second constraint denotes the fact that a UAV can not have a flight time which exceeds the observation period $T_b$. Since the lower incomplete Gamma function is monotonically decreasing function with respect to increasing $P_d$, the optimal solution for $P_d$ will lie at the boundary, i.e.
\begin{equation}
    P_d^*= \frac{(P_b T_b+H T_f) - (P_{\mathrm{hov}}+\gamma_d)T_f}{T_b-T_f}.
\end{equation}

We now consider the problem of $\mathrm{SNR}$ outage minimization to optimize the flight time of the UAV $T_f$. The problem can be formulated as follows:
\begin{equation*}
\begin{aligned}
& \underset{T_f}{\text{minimize}}
& & \frac{\gamma\left(m,\frac{(2^{\frac{T_b R_{\mathrm{th}}}{T_b-T_f}}-1 )N_0}{\Theta L(h_d,r_i) P_d}\right)}{\Gamma(m)},\\
& \text{subject to}
& & P_d \leq \frac{(P_b T_b+H T_f) - (P_{\mathrm{hov}}+\gamma_d)T_f}{T_b-T_f},
\\
& \text{}
& & T_f<T_b.
\end{aligned}
\end{equation*}
 Since the lower incomplete Gamma function is monotonically increasing with respect to increasing $T_f$, the optimal solution for $T_f$ will be based upon the minimum value of the $T_f$ obtained from constraints, i.e.
\begin{equation}
    T_f^*= \mathrm{min}\left(T_b, \frac{(P_d-P_b) T_b}{H-(P_{\mathrm{hov}}+\gamma_d)+P_d}\right).
\end{equation}
The distance between the origin  and an arbitrary user whose location is uniformly distributed within a circular region of radius $R$ has the PDF given by: 
\begin{equation}
f_R(r)=\frac{2r}{R^2}, 
\end{equation}
where $0 \leq r \leq R$. By averaging over the distribution of $r$, the SNR outage of a given  user can thus be calculated as follows:
\begin{align}
&S_{\mathrm{out}}= \int\limits_0^{R} S_{\mathrm{out}}(r)\frac{2r}{R^2} \mathrm{dr},
\end{align}
in which $S_\mathrm{out}(r)$ is given in \eqref{out1} and \eqref{out2} for Gamma fading and Rayleigh fading channels, respectively.
For Rayleigh fading channels, the SNR outage can be computed as follows:
\begin{align}
=1-\int\limits_0^{R}  e^{- C (h_d^2+r^2) z^{P_{\mathrm{LOS}}(h_d,r)} }\frac{2r}{R^2} \mathrm{dr},
\end{align}
where $C=\frac{SNR_{\mathrm{th}} N_0 y(4 \pi f_c)^2}{c^2  P_d}$ and $z=\frac{x}{y}$. The integral can be computed numerically using standard mathematical software packages such as \texttt{Mathematica, Matlab and Maple}.

\section{Moments of the Harvested Power and Their Applications}

In this section, we first derive the moments of the harvested solar and wind power. Based on these moments, we characterize important metrics such as eventual energy outage probability, probability of charging the UAV battery in a finite time, average battery charging time, probability of eventual energy outage, and energy outage probability considering that  the energy arrival  follows a discrete stochastic process where both the amount of energy received at a certain time as well as  the inter-arrival times are random.

\subsection{Moments of the Harvested Solar Power}
The $i$-th moment of the harvested solar power can be calculated by applying the definition  $\mathbb{E}[P^i]=\int\limits_{0}^\infty p^i f_P(p) dp$. The closed-form expressions for the $i$-th moment of distribution of harvested power from solar energy $M_i$ can be derived as follows:
\begin{align}
&\mathbb{E}[P^i] =\frac{ \int\limits_0^{\eta_c K_c} p^{i-0.5}e^{ -\frac{\left(\sqrt{\frac{K_c p}{\eta_c}}-I_d\right)^2}{2}}}{2 \sqrt{2 \pi \eta_c/K_c}}+\frac{\int\limits_{\eta_c K_c}^\infty p^i e^{-\frac{\left(\frac{p}{\eta_c}-I_d\right)^2}{2}}  }{\eta_c \sqrt{2 \pi}}, 
\nonumber\\
&\stackrel{(a)}{=}\underbrace{\frac{\int\limits_{-I_d}^{K_c-I_d} (x+I_d)^{2i} e^{-\frac{x^2}{2}} dx }{\left(\frac{K_c}{\eta_c}\right)^i\sqrt{2\pi}}}_{{\bf I_1}}  + \underbrace{\frac{\int\limits_{K_c-I_d}^\infty (y+I_d)^i \eta_c^i e^{-\frac{y^2}{2}} dy}{\sqrt{2\pi}}}_{{\bf I_2}},
\label{Lap11}
\end{align}
where (a) results from substituting $\sqrt{\frac{K_c p}{\eta_c}}-I_d=x$ and $\frac{p}{\eta_c}-I_d=y$ in the first  and second integral, respectively. The integral in ${\bf I_1}$ can be solved in closed-form as follows:
\begin{align*}
{\bf I_1}=\frac{\eta_c^4 e^{-\frac{I_d^2}{2}} \left(
	2m^\prime -2 e^{I_d K_c -\frac{K_c^2}{2}} A+
	e^{\frac{I_d^2}{2}}d\sqrt{2 \pi} \nu
	\right)}{K_c^4 2\sqrt{2 \pi}},  
\end{align*}
where $m^\prime = I_d(279+185I_d^2+27I_d^4+I_d^6)$, $a=(7+I_d^2)(15+18I_d^2+I_d^4)$, $b=I_d(87+22I_d^2+I_d^4)$, $c=(35+18I_d^2+I_d^4)$, $d=105+420 I_d^2+210I_d^4+28I_d^6+I_d^8$, $A=m^\prime+aK_c+b K_c^2+   c K_c^3+I_d(13+I_d^2)K_c^4+(7+I_d^2)K_c^5+I_d K_c^6+K_c^7 $.

Similarly, the integral in ${\bf I_2}$ can be solved as follows:
\begin{align*}
{\bf I_2}=\frac{\eta_c^2}{\sqrt{2\pi}}(I_d + K_c)e^{-\frac{(I_d-K_c)^2}{2}}+\frac{\eta_c^2}{2}(1+I_d^2) A^\prime,
\end{align*}
where $A^\prime= 1+\mathrm{erf} \left(\frac{I_d-K_c}{\sqrt{2} \mathrm{Sign}(I_d-K_c)} \right)\mathrm{sign}[I_d-K_c]^3 $.

Subsequently, the first moment can be simplified in the closed-form as in the following Lemma.
\begin{Corollary}[Mean Harvested Solar Power]
The closed-form expression for the first moment of harvested solar power $\mathbb{E}[P]$ can be obtained by substituting $i=1$ in \eqref{Lap11} as shown below: 
	\begin{align*}
	M_1=\frac{1}{2} \eta_c I_d \tau-\frac{I_d \eta_c ( e^{-\frac{(I_d-K_c)^2}{2}}-e^{-\frac{I_d^2}{2}} )}{\sqrt{2 \pi} K_c}  -\frac{1}{2} \frac{\eta_c}{K_c}(1+I_d^2)\nu,
	\end{align*}
	where $\nu=\tau-1-\mathrm{erf}\left(\frac{I_d}{\sqrt{2}}\right)$ and $\tau= 1+ \mathrm{erf}\left(\frac{I_d-K_c}{\sqrt{2}}\right)$.
\end{Corollary}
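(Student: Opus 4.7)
The plan is to specialize the general formula (17) for $\mathbb{E}[P^i]$ to $i=1$ and evaluate the two resulting Gaussian moments in closed form. With $i=1$, the factor $(K_c/\eta_c)^{-i}$ in the denominator becomes $\eta_c/K_c$, and the two integrals reduce to $\mathbf{I}_1=\tfrac{\eta_c}{K_c\sqrt{2\pi}}\int_{-I_d}^{K_c-I_d}(x+I_d)^2 e^{-x^2/2}\,dx$ and $\mathbf{I}_2=\tfrac{\eta_c}{\sqrt{2\pi}}\int_{K_c-I_d}^{\infty}(y+I_d)\,e^{-y^2/2}\,dy$. No further change of variables is needed, because the substitutions mandated by the piecewise PDF of \textbf{Theorem 1} were already absorbed into the definitions of $\mathbf{I}_1$ and $\mathbf{I}_2$.

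For $\mathbf{I}_1$ I would expand $(x+I_d)^2=x^2+2I_d x+I_d^2$ and apply the three elementary antiderivatives $\int e^{-x^2/2}\,dx=\sqrt{\pi/2}\,\mathrm{erf}(x/\sqrt 2)$, $\int x\,e^{-x^2/2}\,dx=-e^{-x^2/2}$, and (via integration by parts) $\int x^2 e^{-x^2/2}\,dx=-x\,e^{-x^2/2}+\sqrt{\pi/2}\,\mathrm{erf}(x/\sqrt 2)$. For $\mathbf{I}_2$ the same split produces $\tfrac{\eta_c}{\sqrt{2\pi}}e^{-(I_d-K_c)^2/2}+\tfrac{1}{2}\eta_c I_d\,\mathrm{erfc}\!\left((K_c-I_d)/\sqrt 2\right)$; the second summand is already the stated leading term $\tfrac{1}{2}\eta_c I_d\,\tau$, since $\tau=1+\mathrm{erf}((I_d-K_c)/\sqrt 2)=\mathrm{erfc}((K_c-I_d)/\sqrt 2)$.

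The last step is to collect the boundary evaluations. The two $\mathrm{erf}$ contributions inside $\mathbf{I}_1$ (one from the constant $I_d^2$ term and one from the integration-by-parts tail of the $x^2$ piece) share the common coefficient $\tfrac{\eta_c}{2K_c}(1+I_d^2)$, and using $\mathrm{erf}(-z)=-\mathrm{erf}(z)$ their sum equals $\tfrac{\eta_c}{2K_c}(1+I_d^2)\bigl[\mathrm{erf}((K_c-I_d)/\sqrt 2)+\mathrm{erf}(I_d/\sqrt 2)\bigr]=-\tfrac{\eta_c}{2K_c}(1+I_d^2)\nu$, producing the third piece of the target. The exponential boundary at $x=K_c-I_d$ inside $\mathbf{I}_1$, after combining the $x^2$-IBP boundary $-(K_c-I_d)$ with the $2I_d x$ contribution $-2I_d$, carries the coefficient $-(K_c+I_d)$; added to the $+1$ coefficient of the matching boundary in $\mathbf{I}_2$ this collapses to the clean $-I_d/K_c$, yielding $-\tfrac{I_d\eta_c}{K_c\sqrt{2\pi}}e^{-(I_d-K_c)^2/2}$, while the lone boundary at $x=-I_d$ from $\mathbf{I}_1$ supplies the complementary $+\tfrac{I_d\eta_c}{K_c\sqrt{2\pi}}e^{-I_d^2/2}$, so the two together form the middle term of the corollary.

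The primary obstacle is the algebraic bookkeeping rather than any single integral. Three distinct antiderivatives, spread across two integrals sharing the intermediate boundary $x=K_c-I_d$, must be tracked simultaneously; only after the pairwise cancellation $-(K_c+I_d)/K_c+1=-I_d/K_c$ does the expression simplify to the compact three-term form stated, and only after the sign identity for $\mathrm{erf}$ do the error-function boundaries collapse into the single symbol $\nu$. The individual integrals are elementary.
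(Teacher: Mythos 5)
Your proposal is correct and follows essentially the same route as the paper: you specialize the $\mathbf{I}_1+\mathbf{I}_2$ decomposition of the general moment expression to $i=1$ and evaluate the two Gaussian integrals with elementary antiderivatives, which is exactly the computation the paper leaves implicit behind ``substituting $i=1$'' in \eqref{Lap11}. Your bookkeeping---the cancellation $-(K_c+I_d)/K_c+1=-I_d/K_c$, together with $\mathrm{erfc}\bigl((K_c-I_d)/\sqrt{2}\bigr)=\tau$ and $\mathrm{erf}\bigl((K_c-I_d)/\sqrt{2}\bigr)+\mathrm{erf}\bigl(I_d/\sqrt{2}\bigr)=-\nu$---reproduces the stated closed form exactly.
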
	

\begin{Corollary}[Second Moment of Harvested Solar Power]
	The second moment of harvested solar power can be obtained in closed-form as follows: 
	\begin{align*}
	&M_2=
	\frac{\eta_c^2 e^{-\frac{I_d^2}{2}}I_d (5+I_d^2)}{K_c^2\sqrt{2\pi}}  +\frac{\eta_c^2 (1+I_d^2)\tau}{2}-\frac{\eta_c^2(3+6I_d+I_d^4)\nu }{2K_c^2} \\&+\frac{\eta_c^2}{\sqrt{2\pi}} e^{-\frac{(I_d-K_c)^2}{2}} \left(I_d-\frac{I_d^3}{K_c^2}-\frac{3}{K_c}-\frac{I_d^2}{K_c}-\frac{5I_d}{K_c^2}-1\right),
	\end{align*}
\end{Corollary}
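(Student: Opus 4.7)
The plan is to specialize the general $i$-th moment formula~\eqref{Lap11} to $i=2$ and evaluate the two resulting Gaussian polynomial integrals ${\bf I_1}$ and ${\bf I_2}$ in closed form, then collect the terms into the target shape written in the statement. By \textbf{Theorem~1} we already have the piecewise PDF $f_P(p)$, and equation~\eqref{Lap11} shows that after the substitutions $x=\sqrt{K_c p/\eta_c}-I_d$ on $[0,\eta_c K_c]$ and $y=p/\eta_c-I_d$ on $[\eta_c K_c,\infty)$, the computation reduces to polynomial $\times$ Gaussian integrals over finite and semi-infinite intervals, respectively.

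First I would write, for $i=2$,
\begin{equation*}
{\bf I_1}=\frac{\eta_c^2}{K_c^2\sqrt{2\pi}}\int_{-I_d}^{K_c-I_d}(x+I_d)^{4}e^{-x^{2}/2}\,dx,
\qquad
{\bf I_2}=\frac{\eta_c^{2}}{\sqrt{2\pi}}\int_{K_c-I_d}^{\infty}(y+I_d)^{2}e^{-y^{2}/2}\,dy,
\end{equation*}
and expand the binomials $(x+I_d)^{4}$ and $(y+I_d)^{2}$. Each resulting monomial integral $\int x^{n}e^{-x^{2}/2}\,dx$ has a standard antiderivative: for even $n$ it produces terms of the form $\mathrm{erf}(\cdot/\sqrt{2})$, and for odd $n$ it produces $e^{-x^{2}/2}$ multiplied by a polynomial in $x$ (obtainable by repeated integration by parts, i.e.\ the identity $\int x^{n+1}e^{-x^{2}/2}dx=-x^{n}e^{-x^{2}/2}+n\int x^{n-1}e^{-x^{2}/2}dx$). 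Evaluating at the two endpoints, the lower limit $-I_d$ of ${\bf I_1}$ contributes a clean $e^{-I_d^{2}/2}$ factor while the upper limit $K_c-I_d$ and the lower limit $K_c-I_d$ of ${\bf I_2}$ combine via $e^{-(I_d-K_c)^{2}/2}$. The error functions at the endpoints aggregate exactly into the shorthands $\tau=1+\mathrm{erf}((I_d-K_c)/\sqrt{2})$ and $\nu=\tau-1-\mathrm{erf}(I_d/\sqrt{2})$ introduced in \textbf{Corollary~1}.

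Then I would collect coefficients by the three structural groups that appear in the statement: (i) the $\tau$-term, which comes only from the even parts of ${\bf I_2}$ and produces $\tfrac{1}{2}\eta_c^{2}(1+I_d^{2})\tau$; (ii) the $\nu$-term, which arises from the even parts of ${\bf I_1}$ (the odd parts vanish between symmetric contributions) and gives a coefficient polynomial in $I_d$ divided by $K_c^{2}$; and (iii) the two exponential terms $e^{-I_d^{2}/2}$ and $e^{-(I_d-K_c)^{2}/2}$, which come from the boundary values of the odd-$n$ antiderivatives. Matching powers of $1/K_c$ in the $e^{-(I_d-K_c)^{2}/2}$ coefficient and tracking the $1,I_d,I_d^{2},I_d^{3}$ monomials gives precisely the polynomial $I_d-I_d^{3}/K_c^{2}-3/K_c-I_d^{2}/K_c-5I_d/K_c^{2}-1$ in the statement.

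The routine part is the integration; the main obstacle is the bookkeeping during collection, specifically ensuring that the odd-parity cancellations between ${\bf I_1}$'s upper and lower endpoints are tracked correctly so that the $e^{-I_d^{2}/2}$ coefficient reduces to the compact form $\eta_c^{2}I_d(5+I_d^{2})/(K_c^{2}\sqrt{2\pi})$, and that the $\nu$-coefficient $-(3+6I_d+I_d^{4})/(2K_c^{2})$ emerges after combining the $x^{4}$, $6x^{2}I_d^{2}$, and $I_d^{4}$ contributions of ${\bf I_1}$ with the integration-by-parts residues. As a consistency check I would compare against \textbf{Corollary~1} by verifying that the same boundary expressions and the same $\tau,\nu$ shortcuts reproduce $M_1$ when the exponent is lowered to $i=1$, and I would cross-check numerically against $\int_{0}^{\infty}p^{2}f_P(p)\,dp$ for a representative $(I_d,K_c,\eta_c)$.
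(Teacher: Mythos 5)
Your proposal follows essentially the same route as the paper: specialize the general moment expression \eqref{Lap11} to $i=2$, evaluate the finite and semi-infinite Gaussian polynomial integrals ${\bf I_1},{\bf I_2}$ via the standard $\mathrm{erf}$/exponential antiderivatives, and regroup the endpoint terms into the shorthands $\tau$ and $\nu$, which is exactly how the paper obtains Corollaries~1 and~2. One remark: carrying out your own collection (the $x^4$, $6I_d^2x^2$, $I_d^4$ contributions) actually yields the $\nu$-coefficient $3+6I_d^2+I_d^4$ and a final $e^{-(I_d-K_c)^2/2}$ bracket equal to $-\frac{3}{K_c}-\frac{I_d^2}{K_c}-\frac{5I_d}{K_c^2}-\frac{I_d^3}{K_c^2}$ (the $I_d$ and $-1$ terms cancel against the $K_c+I_d$ boundary term of ${\bf I_2}$), which a numerical check confirms, so the printed corollary appears to contain typographical slips that your derivation would correct rather than reproduce.
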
	
where $\nu=\tau-1-\mathrm{erf}\left(\frac{I_d}{\sqrt{2}}\right)$ and $\tau= 1+ \mathrm{erf}\left(\frac{I_d-K_c}{\sqrt{2}}\right)$.

\subsection{Moments of the Harvested Wind Power}
The $i$-th moment of the harvested wind  power can be calculated by applying the definition  $\mathbb{E}[P_w^i]=\int\limits_{0}^\infty p_w^i f_{P_w}(p_w) dp_w$. The closed-form expressions for the $i$-th moment of wind distribution $M_{wi}$ can be derived as follows:
\begin{align*}
M_{w_i}=\frac{k}{3 a^{k/3} c^k} \int\limits_{a{V_{ci}}^3}^{a{V_{r}}^3}   x^{\frac{k}{3}+i-1} \mathrm{exp} \left(-\frac{x^{\frac{k}{3}}}{a^{\frac{k}{3} }c^k} \right) dx.
\end{align*}
This can be solved using the identity 
$
\int\limits_0^u x^m e^{-\beta x^n} dx= \frac{\gamma(v, \beta u^n)}{n \beta^v}, v=\frac{m+1}{n}
$~\cite{ebook}.
where $u>0$, Re $v>0$, Re $n>0$ and Re $\beta>0$. Hence, we can solve $M_{w_i}$ considering $m=\frac{k}{3}+i-1, n=\frac{k}{3}$ and $\beta=a^{-\frac{k}{3} }c^{-k}$. Thus $v=1+\frac{3i}{k}+1$. We also know that $\int\limits_a^b f(x) dx= \int\limits_0^{b^\prime} f(x) dx - \int\limits_0^{a^\prime} f(x) dx$.
Here $b^\prime= a {V_{r}}^3$ and $a^\prime=a{V_{ci}}^3$. Finally after simplification $M_{w_i}$ can be expressed as:
\begin{equation}
M_{w_i}=a^i c^{3i} \left[ \gamma \left( 1+\frac{3i}{k}, \left(\frac{V_r}{c} \right)^k \right) - \gamma \left( 1+\frac{3i}{k}, \left(\frac{V_{ci}}{c} \right)^k \right) \right].
\end{equation}

Subsequently, the first and second moments can be simplified in the closed-form as in the following Lemma.
\begin{Corollary}[First and Second Moments of Harvested Wind Power]
	The first and second moment of harvested wind power can be obtained in closed-form, respectively, as follows: 
	\begin{equation*}
	M_{w_1}=a c^3 \left[ \gamma \left( 1+\frac{3}{k}, \left(\frac{V_r}{c} \right)^k \right) - \gamma \left( 1+\frac{3}{k}, \left(\frac{V_{ci}}{c} \right)^k \right) \right],
	\end{equation*}
	\begin{equation*}
	M_{w_2}=a^2 c^6 \left[ \gamma \left( 1+\frac{6}{k}, \left(\frac{V_r}{c} \right)^k \right) - \gamma \left( 1+\frac{6}{k}, \left(\frac{V_{ci}}{c} \right)^k \right) \right].
	\end{equation*}
\end{Corollary}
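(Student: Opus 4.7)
The plan is simply to specialize the general closed-form expression for the $i$-th moment $M_{w_i}$ that has already been derived just above the Corollary. First I would recall the general formula
$$M_{w_i}=a^i c^{3i}\left[\gamma\!\left(1+\tfrac{3i}{k},\left(\tfrac{V_r}{c}\right)^k\right)-\gamma\!\left(1+\tfrac{3i}{k},\left(\tfrac{V_{ci}}{c}\right)^k\right)\right],$$
which followed from applying the identity $\int_0^u x^m e^{-\beta x^n}\,dx=\gamma(v,\beta u^n)/(n\beta^v)$ with $m=k/3+i-1$, $n=k/3$, $\beta=a^{-k/3}c^{-k}$ (so that $v=1+3i/k$), combined with the additivity of the integral over $[aV_{ci}^3,\,aV_r^3]$. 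Then I would substitute $i=1$ and $i=2$ into this expression to obtain the two stated formulas.

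For $i=1$, the prefactor $a^i c^{3i}$ reduces to $a c^3$ and the first argument of each lower incomplete Gamma function becomes $1+3/k$, yielding $M_{w_1}$ exactly as claimed. For $i=2$ the prefactor becomes $a^2 c^6$ and the first argument becomes $1+6/k$, which gives $M_{w_2}$. Nothing beyond direct substitution and arithmetic is needed, so there is essentially no technical obstacle; the work lies entirely in the preceding general $M_{w_i}$ derivation, which is already complete.

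The only minor subtlety worth flagging is that the displayed general formula only captures the contribution of the continuous Weibull-driven portion of $f_{P_w}$ on $[aV_{ci}^3,\,aV_r^3]$ from Theorem~2; the point masses at $p_w=P_r$ and $p_w=0$ would respectively add $P_r^i\bigl(F(V_{co})-F(V_r)\bigr)$ and $0$ to $\mathbb{E}[P_w^i]$. Under the convention adopted in the integral expression for $M_{w_i}$ just above the Corollary, those Dirac contributions are suppressed, and the Corollary follows verbatim from plugging $i\in\{1,2\}$ into that closed-form. If one instead wished to retain the rated-power mass, one would simply add $P_r\bigl(F(V_{co})-F(V_r)\bigr)$ to $M_{w_1}$ and $P_r^2\bigl(F(V_{co})-F(V_r)\bigr)$ to $M_{w_2}$; the structure of the argument is otherwise unchanged.
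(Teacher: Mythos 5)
Your proposal is correct and follows exactly the paper's route: the paper derives the general closed-form $M_{w_i}$ via the identity $\int_0^u x^m e^{-\beta x^n}\,dx=\gamma(v,\beta u^n)/(n\beta^v)$ and then obtains the Corollary by direct substitution of $i=1$ and $i=2$, just as you do. Your remark that the Dirac point masses at $p_w=P_r$ and $p_w=0$ from Theorem~2 are omitted by this convention (and would otherwise contribute $P_r^i\bigl(F(V_{co})-F(V_r)\bigr)$) is an accurate observation about the paper's definition, but it does not change the argument.
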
	

\subsection{Applications of the Moments of the Harvested Power}

To compute the probability of charging the UAV battery in a finite time, average battery charging time, eventual energy outage probability, and average energy outage, we make the following assumptions. The energy arrivals occur in bursts (i.e. as ``energy packets") and the size of the energy packets be $X_1, X_2,\cdots, X_i$ arriving at time $t_1,t_2,\cdots, t_i$, respectively. Here, packet size $X$ is a random variable which follows the distribution of solar or wind harvested power. Let $A_1, A_2, \cdots, A_i$ denote the inter-arrival times of the energy packets such that $t_n= A_0+A_1+\cdots+A_n$. Due to the memory-less property of the Poisson process, each $A_i$ is exponentially distributed with parameter $\lambda$. The mean and variance of $A$ are finite. Since $X$ and $A$ are independent random variables, the accumulated energy in the battery at any time $t$ is given as:
\begin{equation}
U(t)=\sum_{i=1}^{N_A(t)} X_i,
\end{equation}
where $N_A(t)$ denotes the number of packets arrived in time $t$.

To connect the above assumptions with the system model assumed before, note that, in the previous sections,  the  harvested  power $H$ has been a random variable whose value varies  across the flight intervals and whose distribution is given using  {\bf Theorem~1} and {\bf Theorem~2} for solar power ($H = P$)  and wind power ($H = P_w$), respectively.  Here, in this subsection, we consider that the harvested power $H$ is equivalent to the harvested energy per unit time, i.e. $X=H t$, where $X$ is the energy packet size. Assuming $t=1$ for simplicity, the energy packet size is a random variable whose PDF can be given using {\bf Theorem~1} and {\bf Theorem~2} for solar and wind, respectively. For other values of $t$, the distribution of $X$ can be obtained by scaling the distribution of $H$ using single variable transformation. Furthermore, the energy packet arrival follows a Poisson process, i.e. the inter-arrival time $A_i$ is exponentially distributed  with arrival rate $\lambda$.

\subsubsection{Probability of Charging UAV Battery During Flight Time} 

The time needed to charge the battery up to a desired level, $u_0>0$ can be formulated as a first passage time problem and can be derived as follows:
$\tau(u_0)=\mathrm{inf}_t\{t:U(t)>u_0\}$, where $\tau{(u_0)}$ denotes the battery recharge time. Since the charging process is a pure jump process for an ideal battery,  the event $U(t)>u_0$ is equivalent to the event  $\tau(u_0)<t$  which results in $\mathbb{P}(U(t)>u_0)=\mathbb{P}(\tau(u_0)<t)$. Now to find the distribution of $U(t)$, we condition on the number of packets and represent the sum of random variables $X_i$ by convolving their distributions. Finally, we average over the distribution  of the number of packets arrived in time $t$ to obtain:
\begin{equation}
\mathbb{P}(U(t)\leq u_0)=e^{-\lambda t}\sum\limits_{n=0}^\infty \frac{(\lambda t)^n}{n!}F_X^{(n)}(u_0),
\end{equation}
where $F_X^{(n)}(x)$ denotes the $n$-fold convolution of $F_X(x)$ where
$F_X^{(i)}(x)=\int F_X^{(i-1)}(x-t) dF(t)$.
$F_X^{(0)}(x)$ is the unit step function at the origin. The probability of UAV battery charging in time $T_f$ can thus be given as follows~\cite{sudarshan_journal}:
\begin{equation}
\mathbb{P}(\tau(u_0) \leq T_f)=1-e^{-\lambda T_f}\sum\limits_{n=0}^\infty \frac{(\lambda T_f)^n}{n!}F_X^{(n)}(u_0),
\end{equation} 
where  $F_X^{(n)}(u_0)$ can be approximated using central limit theorem as $   \Phi \left( \frac{u_0-n \bar{X}}{\sigma_X \sqrt{n}}\right)$ \cite{sudarshan_journal}, where $\Phi(.)$ denotes the CDF of the standard normal distribution.

\subsubsection{Average Charging Time of UAV Battery}
Subsequently, the expected charging time for UAV battery can be derived as:
\begin{align*}
\mathbb{E}[\tau(u_0)]=&\int\limits_0^{T_f} \mathbb{P}(\tau(u_0) \geq t)dt,\\=&\sum\limits_{n=0}^\infty \Phi \left( \frac{u_0-n \bar{X}}{\sigma_X \sqrt{n}}\right) \frac{1}{n!} \int\limits_0^{T_f} (\lambda t)^n e^{-\lambda t} dt.
\end{align*}
Finally, the average charging time can be given as:
\begin{align*}
\mathbb{E}[\tau(u_0)]=&\sum\limits_{n=0}^\infty \Phi \left( \frac{u_0-n \bar{X}}{\sigma_X \sqrt{n}}\right) \frac{1}{n!} \int\limits_0^{T_f} (\lambda t)^n e^{-\lambda t} dt,\\
=&\frac{\Gamma_l(1+n, T_f \lambda)}{n! \lambda}\sum\limits_{n=0}^\infty \Phi \left( \frac{u_0-n \bar{X}}{\sigma_X \sqrt{n}}\right),
\end{align*}
where  $\bar{X}$ and $\sigma_X$ denote the mean and standard deviation of the solar and wind energy, $u_0$ is the required level of energy up to which we need to charge the battery, $\tau(u_0)$ is the time required to reach the level of energy $u$, $\lambda$ denotes the energy arrival rate, and $n$ is the total number of energy packets arrived. 

\subsubsection{Eventual Energy Outage} The {eventual energy outage} $\phi(u_0)$  is defined as the energy outage that occurs within a finite amount of time while the initial battery energy is $u_0$ and the power consumed is either $P_{\mathrm{hov}} + \gamma_d$ during the flight and $P_{\mathrm{hov}}+P_d$ during the hover time. If the energy packet arrival can be modeled as a Poisson process, the energy surplus at any time $t$ can be modeled as follows:
\small
\begin{equation*}
    U_1(t)=
    \begin{cases}
    u_0-(P_{\mathrm{hov}} + \gamma_d)t+\sum_{i=1}^{N(t)} X_i, & \\
             \quad \quad \quad \quad \mbox{for }t <T_f & \\
    u_0-P_{\mathrm{hov}}t -P_d (t-T_f)-\gamma_d T_f+\sum_{i=1}^{N(T_f)} X_i, & \\
         \quad \quad \quad \quad \mbox{for }t>T_f & \\
    \end{cases}.
\end{equation*}
\normalsize
Here $u_0 \geq 0$ denotes the initial battery energy and  $N(t)$ is the total number of energy packets harvested within time $t$. The energy outage that occurs within finite amount of time making the battery energy to zero is known as eventual energy outage $\phi(u_0)$. 
The first time when battery energy goes to zero is denoted as first time to outage $\tau$. $\tau=\mathrm{inf}\{t>0: U_1(t) \leq 0, U_1(0)=u_0\}$. If $\tau$ is less than $T_f$ or if $T_f \leq \tau \leq T_b$, we say that eventual outage has occurred. Thus, the probability of eventual outage within duration $T_f$ and $T_b-T_f$ can be modeled, respectively, as follows:

\begin{equation}
    \phi(u_0)=
    P(\tau < T_f|u_0, P_{\mathrm{hov}} + \gamma_d), 
\end{equation}
\begin{equation}
\phi(u_0)=
    P(T_f < \tau < T_b|u_0+P_d T_f -\gamma_d T_f,P_{\mathrm{hov}} +P_d). 
\end{equation}
\normalsize
The eventual energy outage probability $\phi(u_0)$ within duration $T_f$ and $T_b-T_f$ can be given using the approach described in \cite{sudarshan_journal}, respectively, as follows:
\small
\begin{equation}
    \phi(u_0)=
    \left(1-\frac{r^* (P_{\mathrm{hov}}+\gamma_d)}{\lambda}\right)e^{- r^* u_0},
\end{equation}
\begin{equation}
\phi(u_0)=
    \left(1-\frac{r^* (P_{\mathrm{hov}} +P_d)}{\lambda}\right)e^{- r^* (u_0+P_d T_f -\gamma_d T_f}),
\end{equation}
\normalsize
where $r^*$ denotes the adjustment coefficient. The value $r^* \neq 0$ is said to be the adjustment coefficient of $X$ if $\mathbb{E}[\mathrm{exp}(r^* X)]=\int e^{r^* x} dF_X=1$. The adjustment coefficient must satisfy the condition
$\mathbb{E}[\mathrm{exp}(r^* X)]=\mathcal{M}_X(r^*)$. Considering Poisson energy arrival, we have: 
\begin{equation*}
1-\frac{p r^*}{\lambda}=\mathcal{M}_X(-r^*),
\end{equation*}
where $p=P_{\mathrm{hov}}+\gamma_d$ during the flight time and $p=P_{\mathrm{hov}}+P_d$ during the transmission time.
A simple expression for $r^*$  can be obtained by making a formal power expansion of $\mathcal{M}_X(-r)$ in terms of moments of $X$ up to second order term as $\mathcal{M}_X(-r) \approx 1- \bar{X} r +\frac{\bar{X^2}}{2}r^2$. As such, a simple generalized approximation of $r^*$  can be given as \cite{sudarshan_journal}:
\begin{equation}
r^* \approx \frac{2p}{\lambda \bar{X^2}}\left(\frac{\lambda \bar{X}}{p}-1\right),
\end{equation}
where $\bar{X}$ is given in \textbf{Corollary~1} and \textbf{Corollary~3} for solar and wind power, respectively.
\subsubsection{Energy Outage}
From Proposition 10 of \cite{sudarshan_journal}, we know that for any harvest store consume system, if there exists a non-negative probability of eventual energy outage, the average energy outage can be calculated as follows:
\begin{equation}
E_{\mathrm{out}}=
\begin{cases}
1-\frac{\lambda \bar{X}}{P_{\mathrm{hov}}+\gamma_d}, &t< T_f\\
1-\frac{\lambda \bar{X}}{P_{\mathrm{hov}}+ P_d}, &t>T_f
\end{cases}.
\end{equation}

\section{Numerical Results and Discussions}
In this section, we first describe the simulation parameters and then present selected numerical and simulation results. We validate the accuracy of the derived expressions for the energy outage probability of UAV and the SNR outage of a ground user. Also, we extract design insights related to the impact of  key UAV parameters such as flight time, transmit power, altitude, and time of the day on the energy outage of UAV and SNR outage of the ground user.

\subsection{Simulation Parameters}

Unless stated otherwise, we use the simulation parameters as listed herein. We consider the maximum sunlight intensity  $I_{\mathrm{max}}=2000$. The threshold for radiation intensity $I$ is $K_c=150$. The efficiency of the solar cell beyond $K_c$ is $\eta_c=0.02$. The random attenuation amount $\Delta I$ follows a normal distribution with mean $\mu =0$ and variance $\sigma^2=1$. The mass of a UAV is taken as  $0.75~\mathrm{kg}$. The number of propellers and propeller radius are, respectively, $4$ and $0.2 $m. The air density is $\rho = 1.225~\mathrm{kg}/m^3$ and $g=9.8$~m/s$^2$ is the gravity of the earth. The activation energy of the UAV is $\gamma_d=2.9$~W. We consider the initial battery power $P_b=2+P_{\mathrm{hov}}+\gamma_d$. Also, $f_c=2.5 \times 10^9$ is the carrier frequency (Hz) and $c=3 \times 10^8$~m/s is the speed of light. The UAV altitude is set at $h=200$~m. The UAV power is assumed to be $P_d = 40 $ W. The UAV flies at a speed $v=10$~m/s. The block duration is set to $T_b=R_{\mathrm{max}}/v_d$, where $R_{\mathrm{max}}=200$~m. The threshold rate requirement for a cellular user is set to $R_{\mathrm{th}}=2$~bps. The S-curve parameters for the UAV are $a=12.08$ and $b=0.11$. Also, $\eta_\mathrm{LOS}=1.6$ and $\eta_ \mathrm{NLOS}=23$ (in dB) are, respectively, the losses corresponding to the LOS and non-LOS reception depending on the environment. The flight time $T_f$ is set to $0.2T_b$, unless  specified otherwise.

\subsection{Results and Discussions}

\subsubsection{Distribution of the Harvested Power}

\begin{figure}[t]
\begin{center}
\includegraphics[scale=.45]{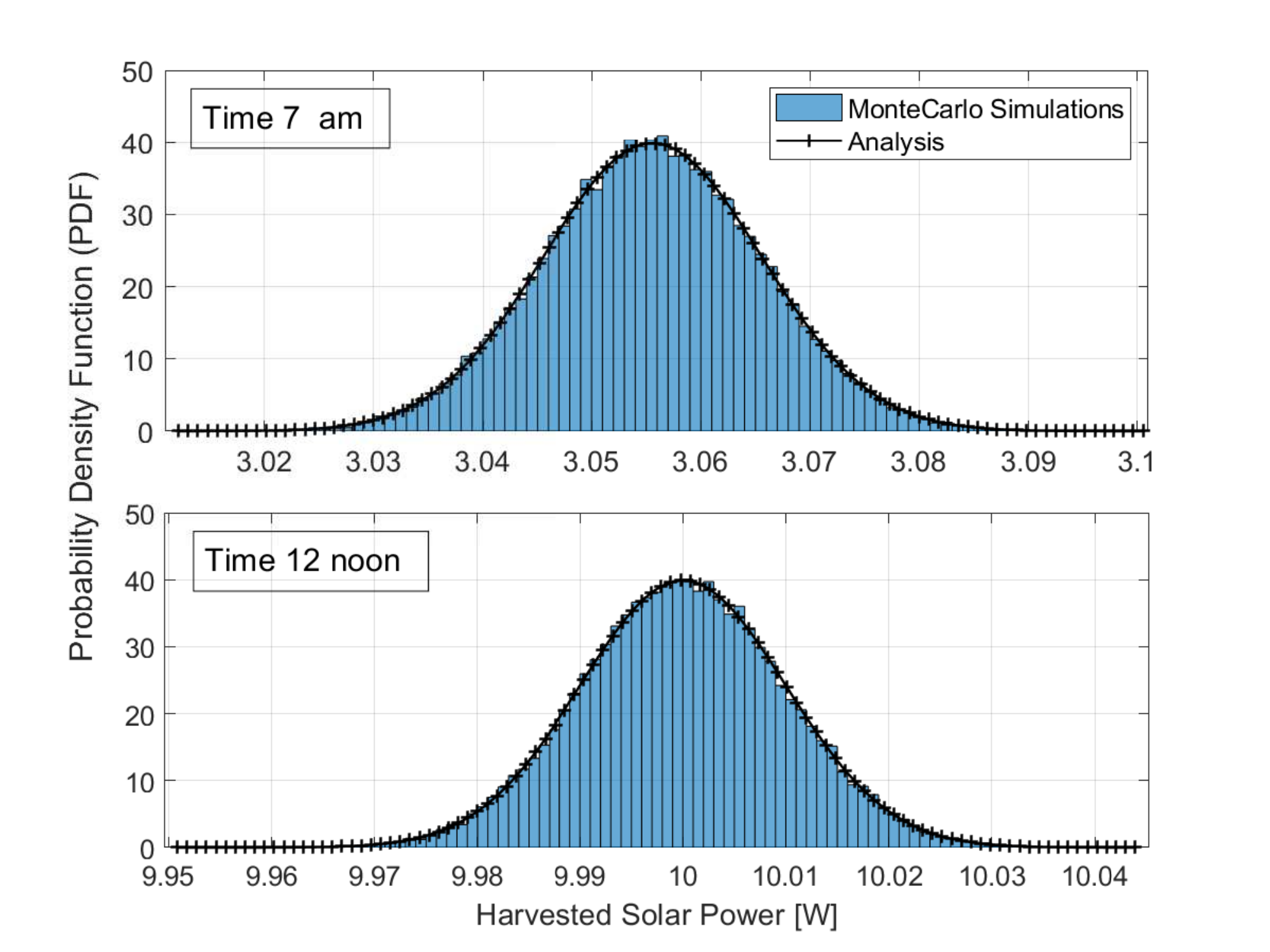}
\end{center}
\caption{Distribution of the harvested power from solar energy as a function of daytime.}
\label{fig1}
\end{figure}
\figref{fig1} depicts the level of the harvested power from solar energy at different times of the day. The Monte-Carlo simulation results match perfectly with the analytical results as we derive an exact expression for the harvested solar power. It can be seen that the harvested power is maximum at noon. Along similar lines, \figref{fig2} depicts the level of the harvested power from the wind energy. Since the PDF of wind power is a mixture of continuous and discrete random variables, we can observe two discrete probabilities when harvested wind power is zero and when the wind power becomes equal to the rated power. 

\begin{figure}[t]
\begin{center}
\includegraphics[scale=.55]{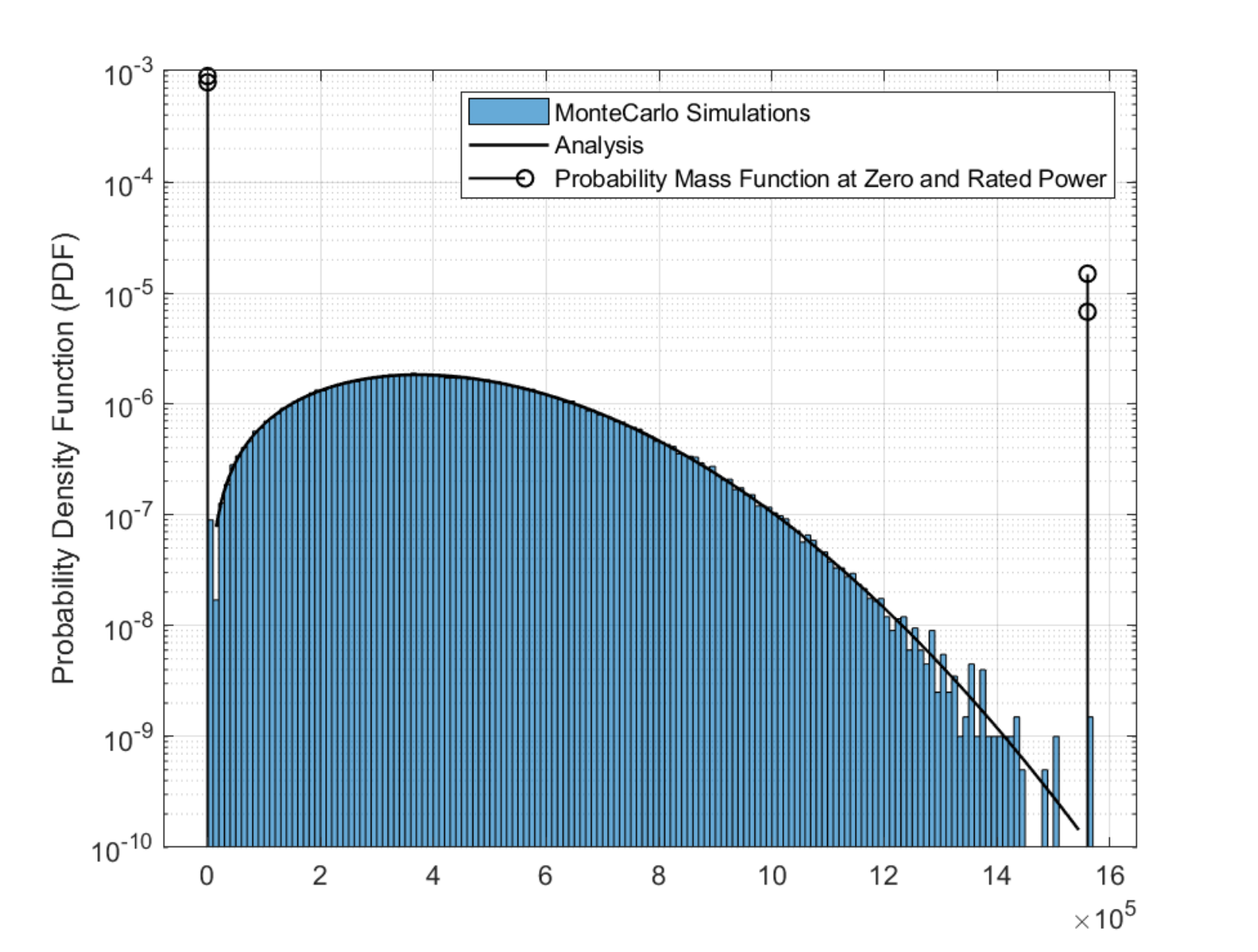}
\end{center}
\caption{Mixed probability mass/density function of the harvested power from wind energy.}
\label{fig2}
\end{figure}

\subsubsection{Impact of the Time of the Day}

\begin{figure}[t]
\begin{center}
\includegraphics[scale=.45]{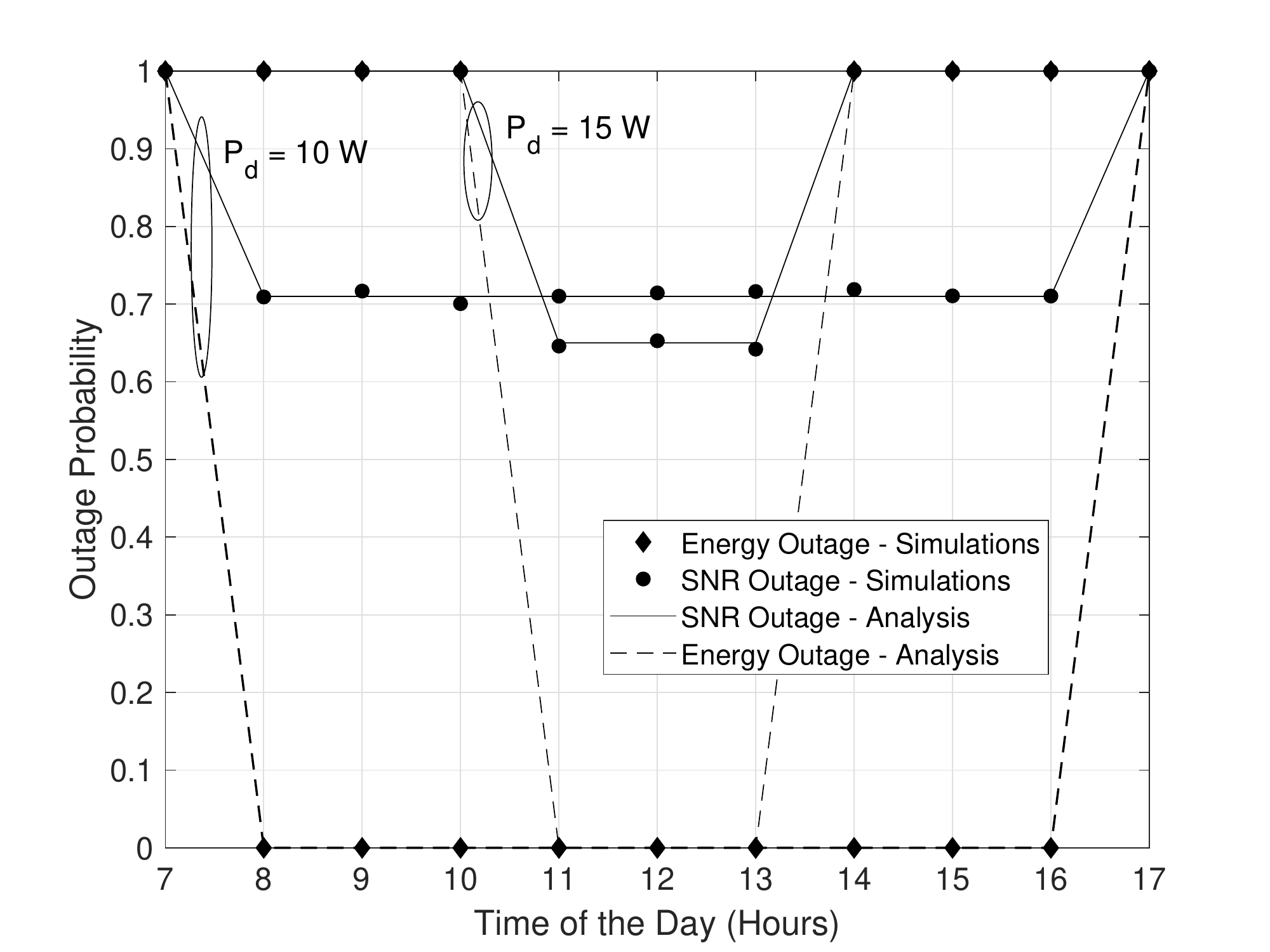}
\end{center}
\caption{SNR outage of the user and energy outage at the UAV as a function of the harvesting time. }
\label{fig4}
\end{figure}

\begin{figure}[t]
	\begin{center}
		\includegraphics[scale=.65]{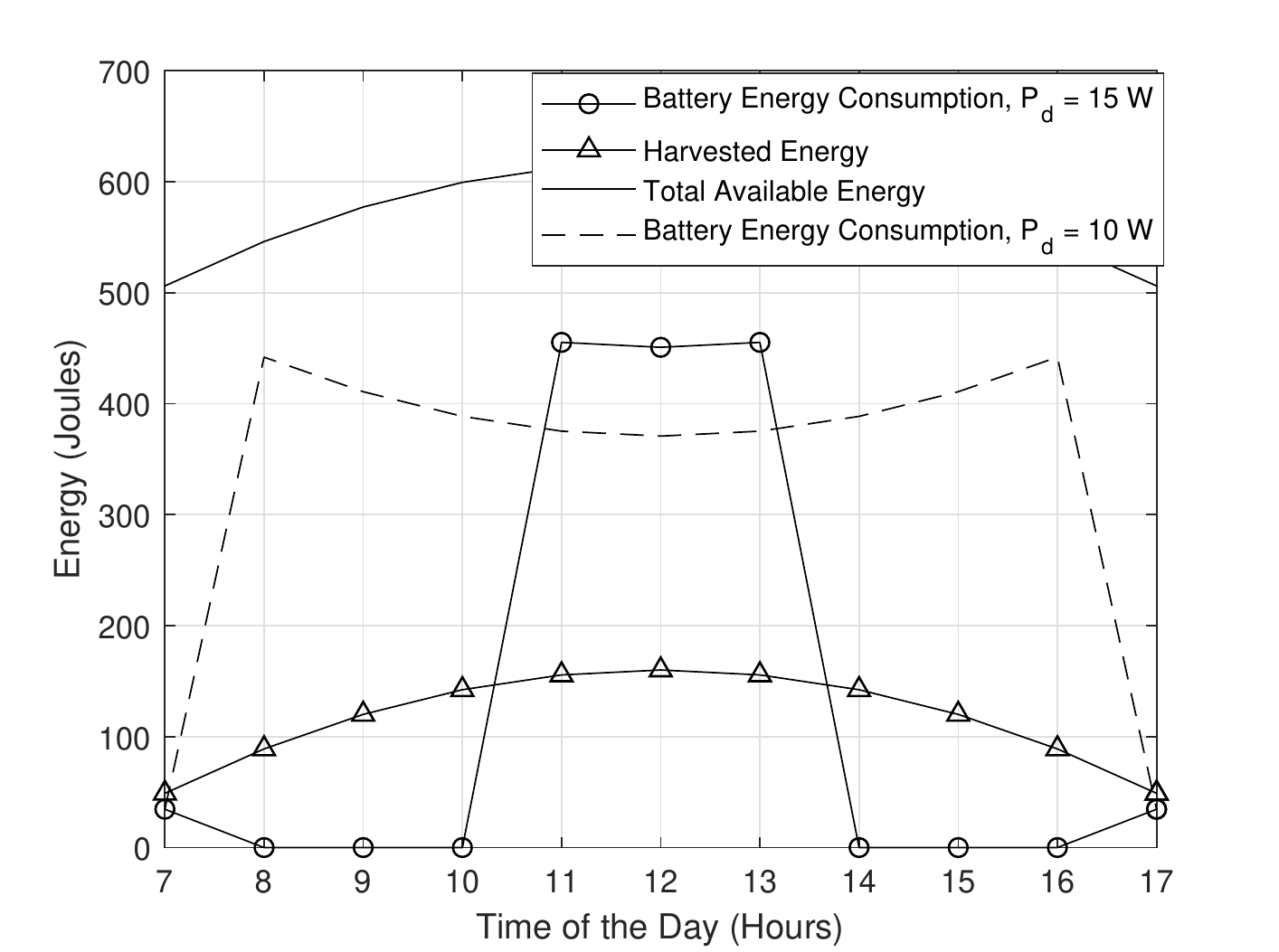}
	\end{center}
	\caption{Harvested and consumed energy of UAV as a function of the harvesting time. }
	\label{fig5}
\end{figure}

In Fig.~\ref{fig4}, we quantify the energy outage probability of the UAV and the SNR outage probability of the user  as a function of  time of the day. Our simulations match well with the derived expressions. 
Since the harvested solar energy varies significantly with the time of harvesting, the energy outage of the UAV and the SNR outage of the user vary accordingly. For example, the probability of energy outage is high in the morning due to lower harvested power and continues to increase until noon. After noon, the harvested power reduces and so the energy outage as expected. Subsequently, the SNR outage probability of a user reduces in first half of the day  and becomes fixed for a specific time duration in which energy outage does not happen. On the other hand, during the time from noon to evening, the SNR outage again increases due to frequent energy outages resulting in  transmission failures to the ground user. In addition, it can be observed that  reducing the transmission power from $P_d=15$~W to $P_d=10$~W will reduce the time duration in which the energy and SNR outages are expected. That is, with $P_d=10$~W the energy outage is not expected from 8:00 to 16:00 hours compared to when $P_d=10$~W where energy outage is expected from 7:00 to 10:00 and 14:00 to 17:00 hours.

In Fig.~\ref{fig5}, we analyze the consumed energy from the battery as a function of the  time. That is, the more harvested energy we have, less consumption from the battery occurs. This is evident from the case when $P_d=10$~W. However, for both cases, we can see that the battery consumption was low initially and then raised significantly which is counter-intuitive. The reason is that when the harvested energy is below a certain threshold, the UAV cannot  support transmission and, subsequently, resumes  return flight which consumes relatively less power  compared to the hovering and transmission consumption. However, as soon as the harvested energy plus UAV battery becomes sufficient to support the transmission, the UAV transmits for $T_b-T_f$ duration resulting in the increased power consumption. For the case of $P_d=15$~W, the harvested energy level needs to be significantly  high compared to $P_d=10$~W to support transmission. Note that, the harvested energy does not vary significantly from 10 am to 2 pm, therefore, a significant power consumption reduction cannot be observed as can be seen in $P_d=10$~W scenario.

\subsubsection{Impact of the UAV Transmit Power}
\begin{figure}[t]
	\begin{center}
		\includegraphics[scale=.55]{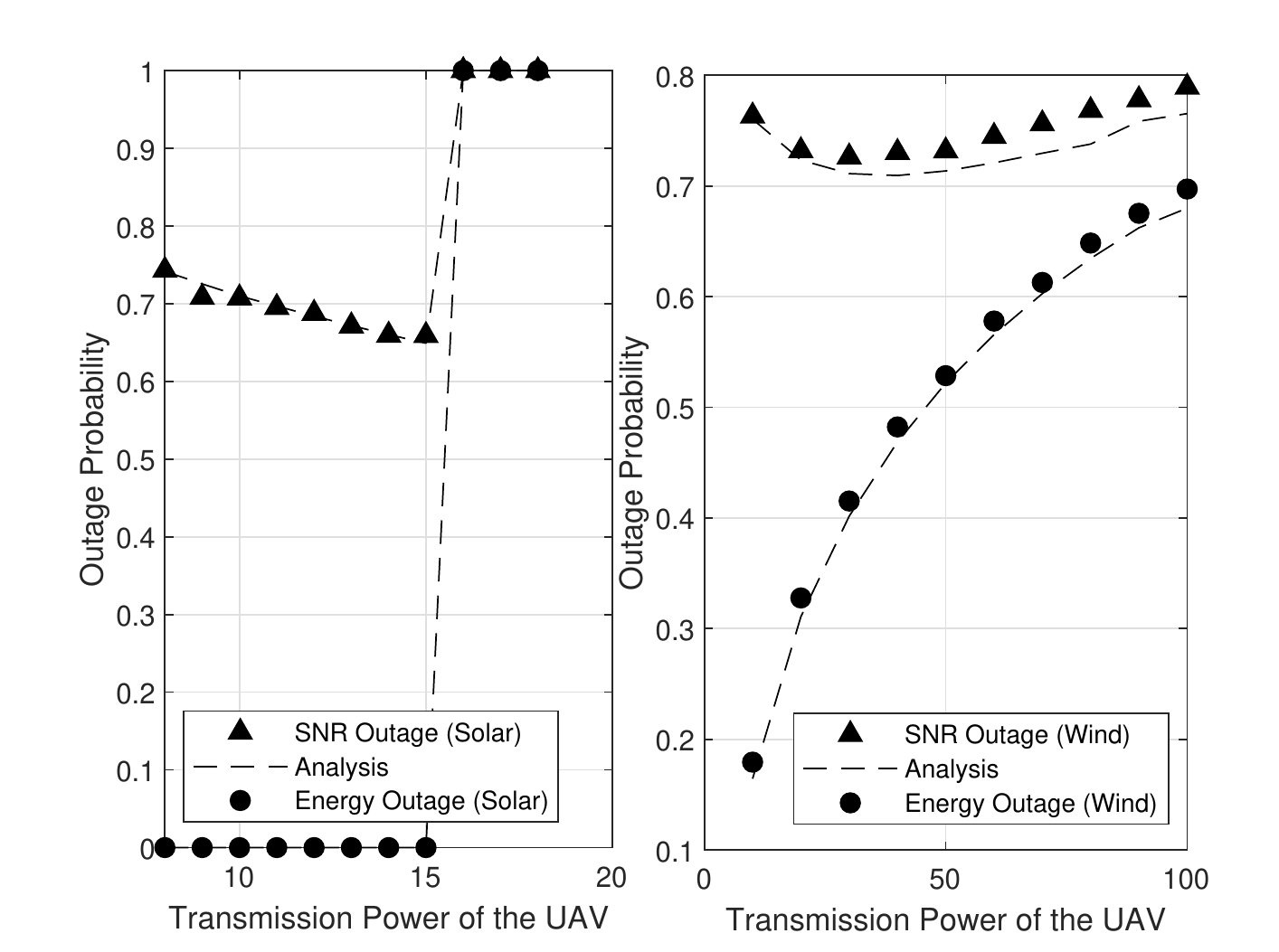}
	\end{center}
	\caption{Energy outage and rate outage as a function of the  transmission power of the UAV $P_d$ and the flight duration $T_f$. }
	\label{fig6}
\end{figure}

\figref{fig6}(a) and (b) depict the energy outage probability of the UAV and the SNR outage probability of the ground user considering both the solar and wind energy sources as a function of the transmission power of the drone, respectively. Analytical results match well with the simulations. We can observe that the increase in the transmit power increases the energy outage due to the increased energy consumption. On the other hand,  the SNR outage at UAV initially reduces due to increased SNR values; however, there is a maximum transmit power threshold beyond which the energy outage at UAV becomes more evident and in turn SNR outage happens. That is, an increase in transmit power increases the energy consumption from the UAV battery and at some point the UAV battery combined with the harvested energy will not be able to support cellular transmission and ultimately the UAV will  fly back to its charging station. Finally, both the solar and wind harvested powers show similar trends. However, the energy outage trends are sharp for solar energy compared to the wind energy. The reason is that the solar energy is nearly deterministic for a given time of the day, whereas the wind energy is random.

\begin{figure}[t]
	\begin{center}
		\includegraphics[scale=.4]{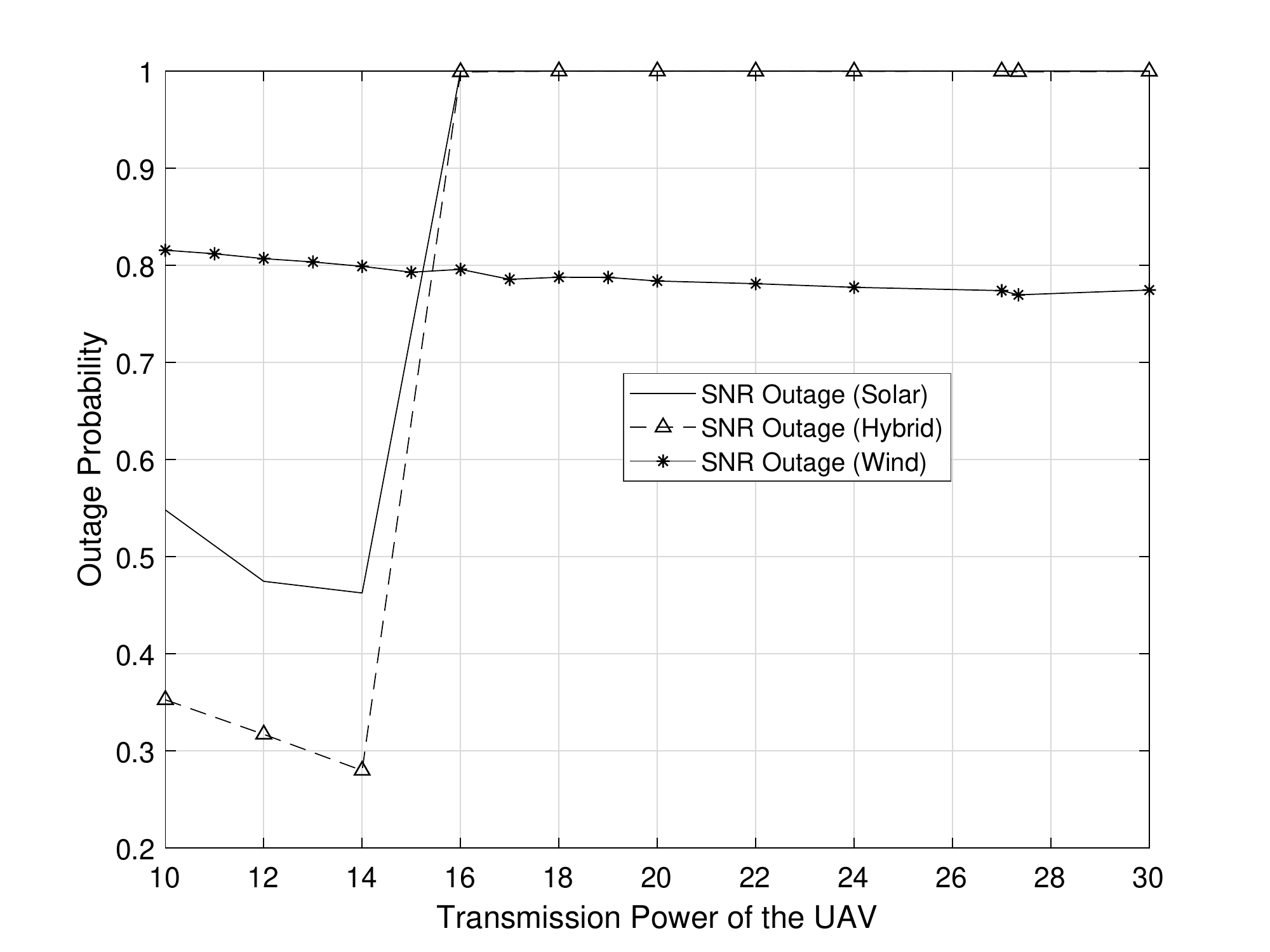}
	\end{center}
	\caption{SNR outage as a function of the  transmission power of the UAV $P_d$. }
	\label{fig6c}
\end{figure}

In \figref{fig6c}, the SNR outage probabilities are plotted against the transmit power of the UAV with three configurations.
The energy outage  for hybrid configuration follows harvested solar power as solar energy is more deterministic with higher harvested levels compared to the wind energy (especially around noon). Another observation is that with an increase in transmit power, the change in wind SNR outage is almost nearly same as the transitions in wind energy are less significant compared to the solar energy. Finally, we observe an improved SNR outage performance with the hybrid (solar and wind) harvesting model. 

In \figref{fig7} (a) and (b), the SNR outage probability for a user is plotted against the altitude of the UAV. It can be observed that an increase in the altitude of the UAV first reduces the SNR outage due to an increased probability of LOS and then increases the SNR outage due to a higher distance from the ground. With an increase in the flight duration, the probability of outage increases due to the reduction of transmission time.

\begin{figure}[t]
	\begin{center}
		\includegraphics[scale=.55]{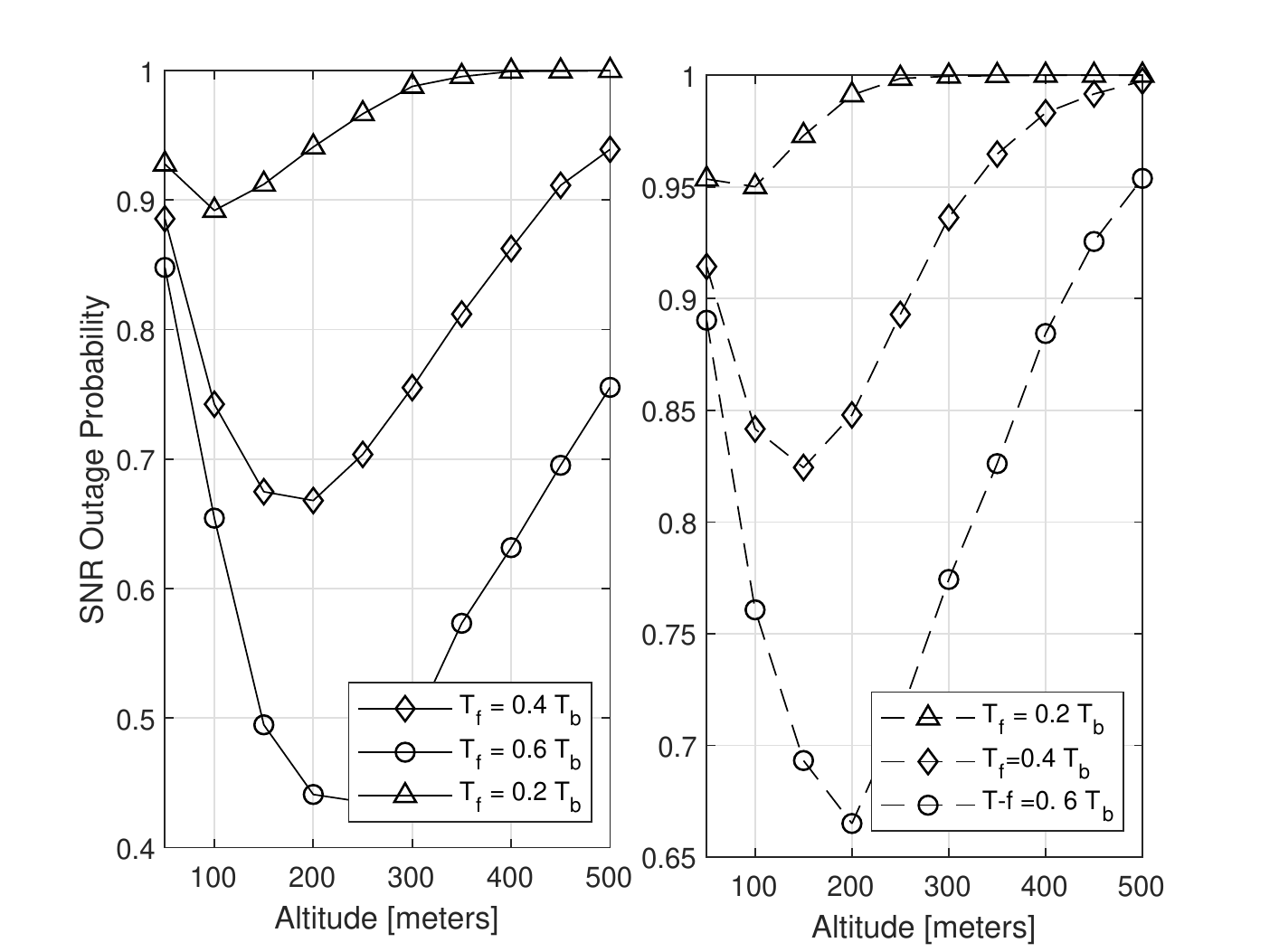}
	\end{center}
	\caption{Energy outage and rate outage as a function of the  altitude of the UAV $h$ and the flight duration $T_f$. }
	\label{fig7}
\end{figure}

\subsubsection{Impact of the  Flight Time}
\begin{figure}[t]
	\begin{center}
		\includegraphics[scale=.5]{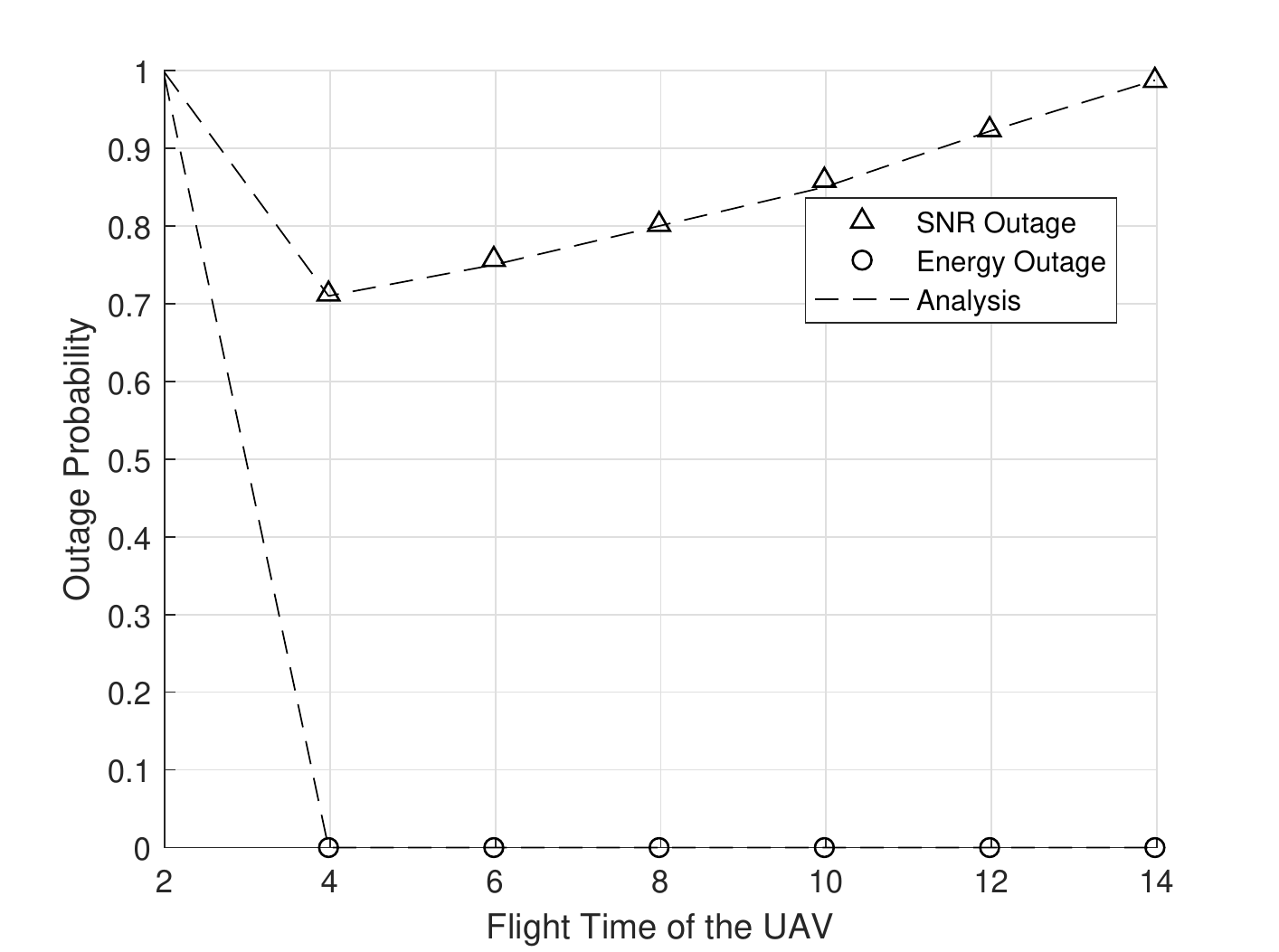}
	\end{center}
	\caption{Energy outage and rate outage as a function of the  transmission power of the UAV $P_d$ and the flight duration $T_f$. }
	\label{fig8}
\end{figure}

In \figref{fig8}, the energy outage and the SNR outage probabilities are plotted against the flight time of the UAV. We observe that with an increase in the flight time, the energy outage probability reduces since a longer flight time implies longer harvesting and shorter transmission durations which in turn reduces energy consumption.   On the other hand, with increasing flight time, the SNR outage first decreases and after a point it starts to increase.  That is, an optimal flight time exists. If the flight time is really small, then the harvesting time becomes limited and most of the energy consumption occurs during transmission, which results in energy outage, and subsequently,  in SNR outage. On the other hand, if the flight time is long, the transmission time becomes limited resulting in an SNR outage.

\subsubsection{Battery Recharge Time}
\begin{figure}[t]
	\begin{center}
		\includegraphics[scale=.5]{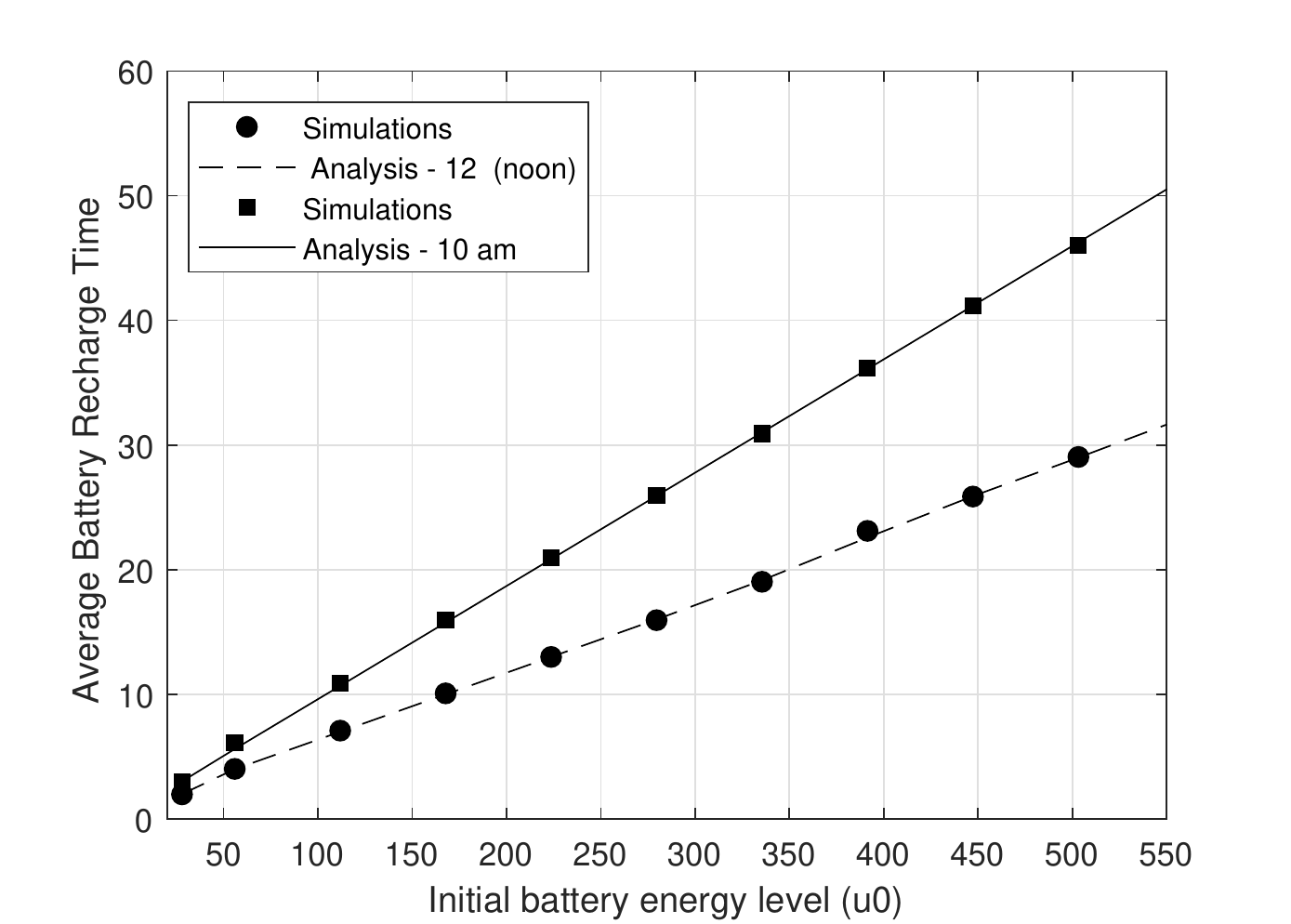}
	\end{center}
	\caption{Energy outage and rate outage as a function of the  transmission power of the UAV $P_d$ and the flight duration $T_f$. }
	\label{fig9}
\end{figure}

\figref{fig9} depicts the time to charge the battery of the UAV as a function of the level upto which the battery needs to be charged. Here, the energy packet size follows distribution of harvested power from solar energy with finite mean and variance. Here the inter arrival
time is exponentially distributed due to Poisson arrival consideration. The results from the simulations match closely with the theoretical prediction. We can observe that the more the level of energy, the more time it takes for the UAV battery to recharge which is intuitive. However, battery charging time is also affected by the harvesting time of the day. We can observe that, at noon, the harvested solar energy is comparatively higher, and therefore, the time to charge the battery  to the same level is reduced.

\begin{figure}[t]
	\begin{center}
		\includegraphics[scale=.45]{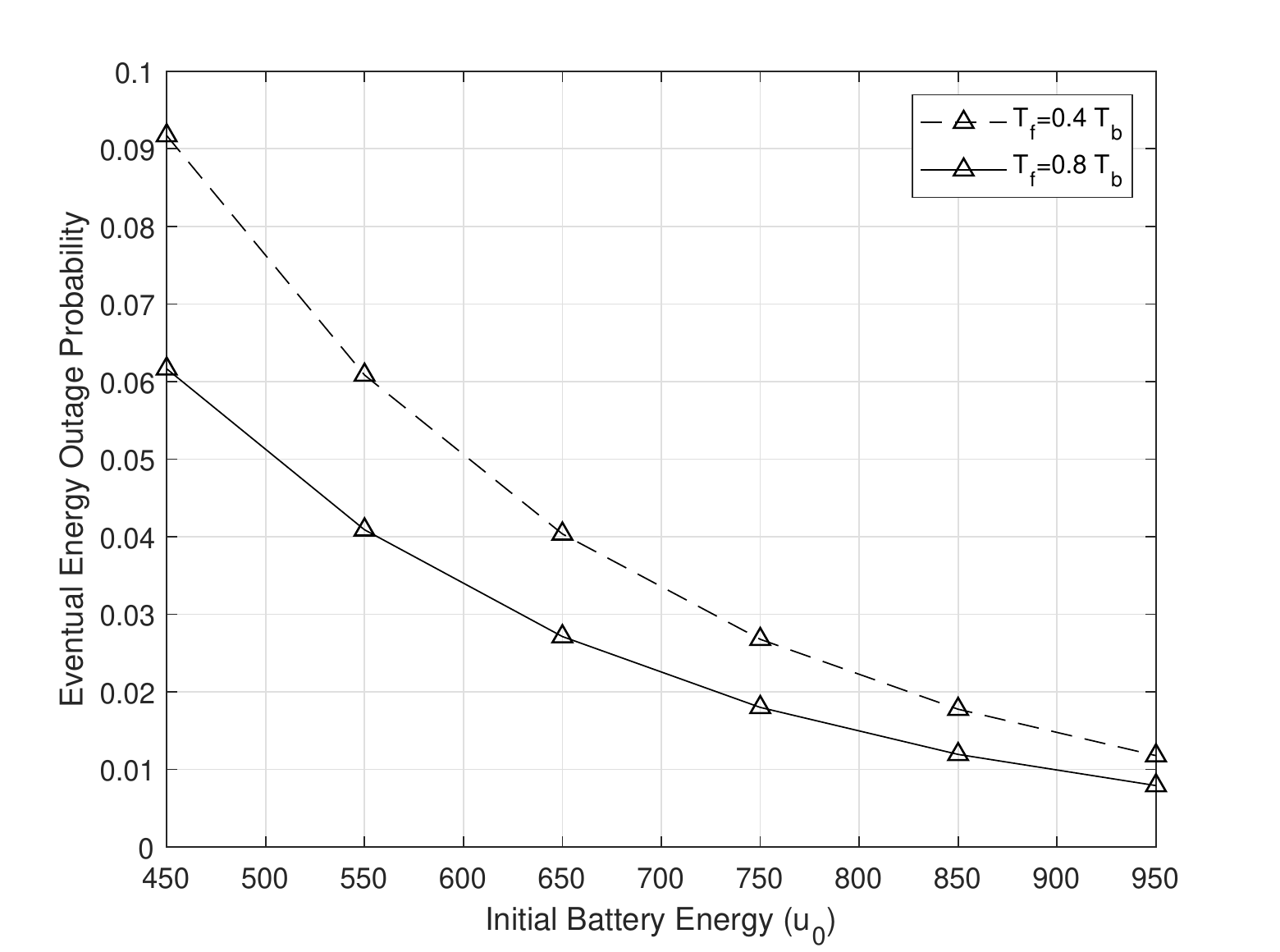}
	\end{center}
	\caption{Eventual energy outage  as a function of the  transmission power of the UAV and the initial UAV battery. }
	\label{fig10}
\end{figure}

In \figref{fig10}, the eventual energy outage is plotted against the initial battery energy of the UAV. It is evident  that, if we increase the initial battery energy, the eventual outage reduces due to the energy consumption required for supporting flight and transmission. We evaluate the eventual energy outage where  time is greater than or equal to $T_f$. Also, we can see that the eventual energy outage probability decreases significantly if $T_f$ increases due to more time devoted for energy harvesting and thus  accumulating more energy packets. Also, the energy consumption during flight is less than the energy consumption  during hover time as discussed earlier.

\begin{figure}[t]
	\begin{center}
		\includegraphics[scale=.65]{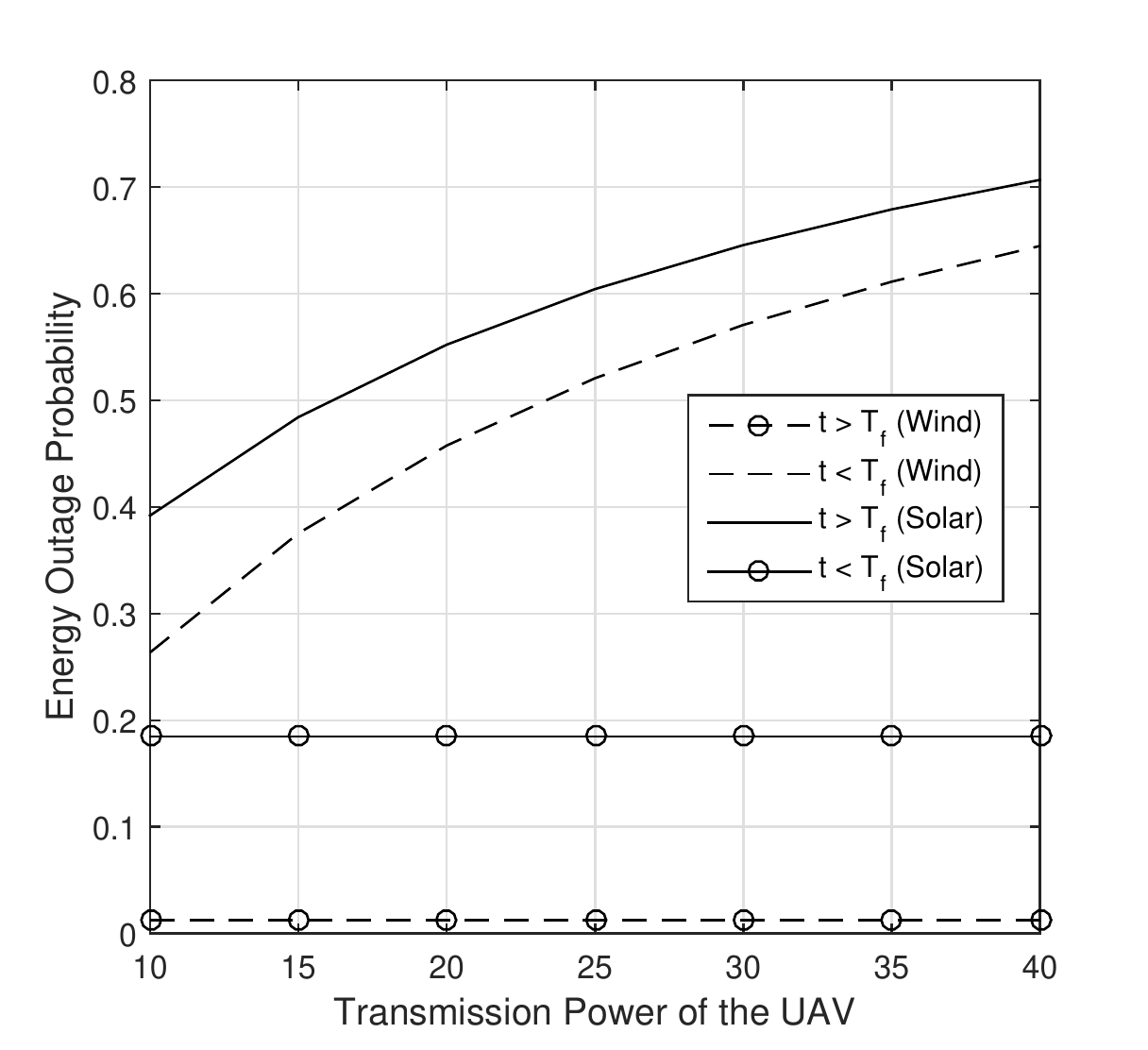}
	\end{center}
	\caption{Energy outage  as a function of the  transmission power of the UAV [$V_{\mathrm{ci}} =1, V_r=8$].}
	\label{fig11}
\end{figure}
In \figref{fig11}, the energy outage is plotted against the transmit power of the UAV. For $t>T_f$, if we increase the transmit power, energy consumption from the UAV battery increases which in result increases the energy outage.  We observe that both solar and wind energy outage follows similar pattern. For any time $t < T_f$, the energy outage is constant as there is no transmission during the flight. Also, the energy consumption during flight is less than the consumption  during transmission therefore we observe reduced energy outage probability.

Note that the energy outage in \figref{fig11} is different from the energy outage in \figref{fig6}. In \figref{fig11}, both the harvested power and energy arrival times are random whereas in \figref{fig6} only the harvested power is random. However, it can be noticed that the energy outage tends to be higher in \figref{fig11} as the harvested power (or equivalently energy packets) are arriving intermittently rather than continuously throughout the flight duration as is the case in \figref{fig6}. It can also be noted that, since solar energy is more deterministic compared to wind, the average energy outage in \figref{fig6} has a sharp transition. In \figref{fig11}, the cumulative solar energy during flight time is more stochastic due to not only the amount of harvested power but also the arrival times. As such, the energy outage increases gradually rather than sharply.

\section{Conclusion}
We  have derived the exact and novel  statistical models for the three renewable energy harvesting scenarios, i.e. harvested solar power, harvested wind power, and  hybrid solar and wind power, which have been verified by the simulations. Based on the developed models, we have derived the energy outage probability of the UAV and the SNR outage probability of the user. The impacts of the transmit power and flight time on the SNR and energy outage have been observed to obtain useful design guidelines for the optimal transmit power and flight duration of the UAVs. In addition, we have shown the novel applications of the moments of the harvested solar and wind power in characterizing performance measures such as the probability of charging UAV battery in the flight time, the average battery charging time, and the eventual energy outage probability. Numerical results demonstrate that an optimal transmit power or flight time can be determined to minimize the energy and SNR outage, respectively. Also, the results demonstrate the distinct stochastic behavior of solar and wind harvested power as the solar energy is relatively more deterministic compared to wind  energy. The developed statistical models for  harvested solar and wind power can as well be used for analyzing the performances of other aerial wireless communications systems as well as integrated terrestrial-aerial communications systems using energy harvesting.


\appendices
\renewcommand{\theequation}{A.\arabic{equation}}
\setcounter{equation}{0}
\section*{Appendix A: Proof of Theorem~1}
Given $I(t)$ is a shifted Gaussian random variable, the harvested solar power of the PV system $P$ at time $t$ in Section II can be rewritten as:
\begin{equation*}
\begin{split}
P= \frac{\eta_c}{K_c} I^2 \mathbb{U}\left(K_c-I\right) + \eta_c I \mathbb{U} \left(I-K_c\right),\\
\stackrel{(a)}{=}\frac{\eta_c}{K_c} I \left(I-K_c\right)\mathbb{U}(K_c-I) + \eta_c I,
\end{split}
\end{equation*}
where (a) utilizes the property $\mathbb{U} \left(K_c - I\right)=1-\mathbb{U} \left(I-K_c\right)$.
For notational simplicity, we skip the argument $t$, i.e. we consider $I(t)=I$. Note that $P$ is a function of $I$; therefore, we define $P=g(I)$. Since Dirac delta functions can be used to evaluate the PDF of the transformed random variables, we can derive $f_P(p)$ as follows:
\begin{align}
&f_P(p)=\int_0^{\infty} \delta \left(p-g(I)\right) f(I) dI,
\nonumber\\
&\Scale[1]{=\int_0^{\infty} \delta \left(p-\frac{\eta_c I \left(I-K_c\right)}{K_c} \mathbb{U}(K_c-I) - \eta_c I\right) f(I) dI}.
\label{A}
\end{align}
Note that the integral in \eqref{A} can be decomposed into two integrals as shown below:
\begin{equation}
 \begin{split}
f_P(p)=\int_0^{K_c} \delta \left[p-\frac{\eta_c}{K_c} I \left[I-K_c\right] + \eta_c I\right] f(I) dI
\\+ \int_{K_c}^{\infty} \delta(p-\eta_c I)f(I) dI.
\end{split}
\label{Bc}
\end{equation}
The first part of \eqref{Bc} can be simplified as follows:
\begin{align*}
&\Scale[1]{\int_0^{K_c} \delta \left[p-\frac{\eta_c}{K_c} I^2\right] f(I) dI,
=\int_0^{K_c} \delta \left[\frac{\eta_c}{K_c} \left(I^2-\frac{K_c}{\eta_c}p \right)\right] f(I) dI},
\\&\stackrel{(a)}{=}
\Scale[1]{\int_0^{K_c} \frac{K_c}{\eta_c} \delta \left[ \left(I^2-\frac{K_c}{\eta_c}p \right)\right] f(I) dI},
\\&\stackrel{(b)}{=}
\Scale[1]{\frac{K_c}{2 \eta_c} \int_0^{K_c} \sqrt{\frac{\eta_c}{p K_c}} \left(\ \delta(I+\sqrt{\frac{p K_c}{\eta_c}}) + \delta(I-\sqrt{\frac{p K_c}{\eta_c}}) \right) f(I) dI},
\\&\stackrel{(c)}{=}
\Scale[1]{\frac{K_c}{2 \eta_c} \int_0^{K_c} \sqrt{\frac{\eta_c}{p K_c}}\delta(I-\sqrt{\frac{p K_c}{\eta_c}})  f(I) dI},
\\&{=}
\Scale[1]{\frac{1}{2} \int_0^{\infty} \sqrt{\frac{K_c}{p \eta_c}}\delta(I-\sqrt{\frac{p K_c}{\eta_c}}) \mathbb{U}(K_c - I) f(I) dI},
\\&\stackrel{(d)}{=}
\Scale[1]{\frac{1}{2}  \sqrt{\frac{K_c}{p \eta_c}}\mathbb{U}(K_c - \sqrt{\frac{p K_c}{\eta_c}}) f(\sqrt{\frac{p K_c}{\eta_c}})},
\label{B}
\end{align*}
where (a) is obtained by applying the properties of Dirac delta functions $\delta(ax)=a^{-1}\delta(x)$ and (b) is obtained by applying the delta function property $\delta(x^2-a^2)=\frac{1}{2a}[\delta(x-a)+\delta(x+a)]$.
Note that radiation intensity $I$ cannot be negative, therefore, we discard the negative root and (c) can thus be obtained. Finally, the integral can be solved using the properties of Dirac-delta function as shown in (d).

The second part of \eqref{Bc} can be rewritten as follows: 
\begin{align}
\int_{0}^{\infty} \delta (p-\eta_c I) \mathbb{U}(I-K_c) f(I) dI,\\
\stackrel{(a)}{=}\frac{1}{\eta_c} f\left(\frac{p}{\eta_c}\right) \mathbb{U} (\frac{p}{\eta_c}-K_c),
\label{aa}
\end{align}
where (a) is obtained by solving the integral.
The distribution of the harvested solar power can thus be derived by combining the results in \eqref{B} and \eqref{aa}  as is shown in {\bf Theorem~1}.

\setcounter{equation}{0}
\renewcommand{\theequation}{B.\arabic{equation}}
\section*{Appendix B: Proof of Theorem~2}
Similar to the proof of {\bf Theorem~1}, the distribution of output power in the range $V_{ci}<v \leq V_r$ can be derived using the single random variable transformation,  as follows:
\begin{equation}
f_{P_W}(p_w)=\int_{V_{ci}}^{V_r} \delta \left(p_w-a \bar v^3\right)f_{\bar V}(\bar v),
\label{wind1}
\end{equation}
where $a=\frac{1}{2} \rho A C$. It is noteworthy that $f_V(v)$ has been truncated first over the range $V_{ci}<v \leq V_r$ yielding $f_{\bar V}(\bar v)$ and then is used in the aforementioned expression. That is,
\begin{equation}
f_{\bar V}(\bar v)=\frac{f_V(v)}{F_V(V_{ci})-F_V(V_{r})}.
\end{equation}
Now we apply the Dirac-delta function property in \eqref{wind1}   
\begin{equation*}
\delta(f(x))=\sum\limits_i \frac{\delta(x-a_i)}{|\frac{df}{dx}(a_i)|},
\end{equation*}
and rewrite \eqref{wind1} as follows:
\begin{equation}
\int_{V_{ci}}^{V_r} \frac{\delta(\bar v-v_1)}{\frac{3}{2} \rho A C \bar v^2} f_{\bar V}(\bar v) d {\bar v}, 
\end{equation}
where $v_1={(\frac{ p_w}{a})}^{1/3}$ is the only real root. After solving the integral, the PDF of output power harvested from wind energy can be given as follows:
\[f_P(p)= 
\begin{cases}
\frac{2 f_{\bar V}( v_1)}{3\rho A C {v_1}^2}, &  a V_{{ci}}^3 < p_w \leq a V^3_{{r}}\\
 \delta(p-P_r),   & a V^3_{{r}} <p_w \leq a V^3_{{co}}\\
 \delta(p) , & \mathrm{otherwise}\\
\end{cases},
\]
The PDF of the harvested power from wind energy can thus be given as in {\bf Theorem~2}.

\setcounter{equation}{0}
\renewcommand{\theequation}{C.\arabic{equation}}
\section*{Appendix C: Proof of Theorem~3: Part (a)}
Using {\bf Theorem~1}, the PDF of the solar power $f_P(p)$ can be given.
The Laplace transform $\mathcal{L}_P(s)= \int\limits_0^\infty e^{-sp}f_P(p) dp$ can thus be expressed as follows:
\begin{align}
\mathcal{L}_P(s) =\frac{\int\limits_{\eta_c K_c}^\infty e^{-sp-\frac{\left(\frac{p}{\eta_c}-I_d\right)^2}{2}}  }{\eta_c \sqrt{2 \pi}} +\frac{ \int\limits_0^{\eta_c K_c} \sqrt{\frac{1}{p}}e^{-sp -\frac{\left(\sqrt{\frac{K_c p}{\eta_c}}-I_d\right)^2}{2}}}{2 \sqrt{2 \pi \eta_c/K_c}}.
\label{Lap1}
\end{align}
Using \cite[Eq. 3.322/1]{ebook}, i.e.
$
\int\limits_u^\infty e^{-\frac{x^2}{4\beta}-\gamma x} dx= \sqrt{\pi \beta} e^{\beta \gamma^2} [1-\mathrm{erf}(\gamma \sqrt{\beta}+\frac{u}{2\sqrt{\beta}})]
$
and taking $\beta=\eta_c^2/2$ and $\gamma= s-\frac{I_d}{\eta_c}$, the first integral in \eqref{Lap1} can be solved in closed-form as follows:
\begin{equation}
\frac{1}{2} e^{-\frac{{I_d}^2}{2}+\frac{1}{2}(I_d-s \eta_c)^2} \mathrm{erfc}\left[\frac{-I_d+K_c+s\eta_c}{\sqrt{2}}\right]. 
\label{close1}
\end{equation}
The second integral in \eqref{Lap1} can be rewritten by changing variables
$\sqrt{p} =x $ as shown below:
\begin{equation}
\sqrt{\frac{K_c}{2 \pi \eta_c }} e^{-\frac{{I_d}^2}{2}}\int\limits_0^{\sqrt{\eta_c K_c}}e^{-(\frac{K_c}{2 \eta_c}+s)x^2+\sqrt{\frac{K_c}{\eta_c }}x I_d} dx.
\label{lap2}
\end{equation}
We note that $\int\limits_0^u f(x)dx = \int\limits_0^\infty f(x)dx - \int\limits_u^\infty f(x)dx$; therefore, we split \eqref{lap2} into two integrals. Then we use  $
\int\limits_0^\infty e^{-\frac{x^2}{4\beta}-\gamma x} dx= \sqrt{\pi \beta} e^{\beta \gamma^2} [1-\mathrm{erf}(\gamma \sqrt{\beta})]
$~\cite[Eq. 3.322/2]{ebook} as well as \cite[Eq. 3.322/1]{ebook} to solve the integrals by taking $\beta=\frac{\eta_c}{2(K_c+2\eta_c s)}$ and $\gamma= -\sqrt{\frac{K_c}{\eta_c}}I_d$. The closed-form for \eqref{lap2} can thus be expressed as follows: 
\begin{equation}
\frac{e^{-\frac{{I_d}^2}{2} + \frac{ I_d^2}{ 2+\frac{4 s \eta_c}{K_c}}}\left(
\mathrm{erf}\left[ \frac{I_d}{\sqrt{2+\frac{4 s \eta_c}{K_c}}}\right]
+ \mathrm{erf}\left[ \frac{I_d-K_c-2 s \eta_c}{\sqrt{2+\frac{4 s \eta_c}{K_c}}}\right]\right)}{2 \sqrt{2s+K_c/\eta_c}}. 
\label{close2}
\end{equation}
Combining \eqref{close1} and \eqref{close2}, the MGF can be expressed as shown in {\bf Theorem~3}.

\setcounter{equation}{0}
\renewcommand{\theequation}{D.\arabic{equation}}
\section*{Appendix~D: Proof of Theorem~3: Part (b)}
The Laplace transform of the harvested wind power  can be derived using $f_{P_w}(p_w)$ given in  {\bf Theorem~2} as follows:
\begin{align*}
\mathcal{L}_{P_w}(s)&=\frac{k a^{-\frac{k}{3}}}{3 c^k} \int\limits_0^\infty e^{- s p} {p_w}^{\frac{k-3}{3}} \mathrm{exp}\left(-\frac{{p_w}^{\frac{k}{3}}}{a^{\frac{k}{3}}c^k}\right)dp,
\nonumber\\
&\stackrel{(a)}{=}\frac{k}{3} \int\limits_0^\infty e^{-s a c^3 x} e^{-x^{\frac{k}{3}}} x^{\frac{k}{3}-1} dx,
\nonumber\\
&\stackrel{(b)}{=} \int\limits_0^\infty e^{-s a c^3 y^{\frac{3}{k}}} e^{-y} dy,
\nonumber\\
&\stackrel{(c)}{=} 
\int\limits_0^\infty \sum\limits_{n=0}^\infty \frac{(-s a c^3 y^{\frac{3}{k}})^n}{n!} e^{-y} dy,
\nonumber\\
&\stackrel{(d)}{=} 
\sum\limits_{n=0}^\infty \frac{(-s a c^3 )^n}{n!} \left(\frac{3n}{k}\right)!,
\label{mgf1}
\end{align*}
where (a) follows by taking $x=\frac{p}{a c^3}$, (b) follows by 
substituting $y=x^{k/3}$, (c) follows by using the identity $e^x=\sum\limits_0^\infty \frac{x^n}{n!}$, and (d) follows by  interchanging the summation and integration and solving the integral using $\int_0^\infty x^m e^{-\mu x} dx=m! \mu^{-m-1}$ from \cite[3.351/2]{ebook}. 

\bibliography{IEEEfull,References}
\bibliographystyle{IEEEtran}

\end{document}